\documentclass{article}
\usepackage{amsmath,amsfonts,amssymb,amsthm, subfig}
\usepackage[round]{natbib}
\usepackage{algpseudocode}
\usepackage{algorithm}
\usepackage{mathtools}
\usepackage{float}
\usepackage{graphicx}
\usepackage{booktabs}
\usepackage{color}
\allowdisplaybreaks


\newcommand{\cd}{{\cal D}}


\newcommand{\bsx}{\boldsymbol{x}}

\newcommand{\bsz}{\boldsymbol{z}}


\newcommand{\real}{\mathbb{R}}


\newcommand{\tran}{\mathsf{T}} 

\newcommand{\rd}{\mathrm{\, d}}

\newcommand{\one}{{\mathbf{1}}}


\newcommand{\var}{{\mathrm{Var}}}
\newcommand{\cov}{{\mathrm{Cov}}}

\newcommand{\wh}{\widehat}
\newcommand{\wt}{\widetilde}



\renewcommand{\ge}{\geqslant}
\renewcommand{\le}{\leqslant}


\newcommand{\dnorm}{\mathcal{N}}




\newcommand{\e}{{\mathbb{E}}} 


\newcommand{\strat}{{\mathrm{strat}}}



\newcount\colveccount
\newcommand*\colvec[1]{
        \global\colveccount#1
        \begin{pmatrix}
        \colvecnext
}
\def\colvecnext#1{
        #1
        \global\advance\colveccount-1
        \ifnum\colveccount>0
                \\
                \expandafter\colvecnext
        \else
                \end{pmatrix}
        \fi
}

\newcommand{\art}[1]{{\color{cyan}[Art: #1]}}
\renewcommand{\art}[1]{}

\newcommand{\bh}{{{\mathrm{BH}}}}

\newcommand{\simiid}{\stackrel{\mathrm{iid}}{\sim}}

\newcommand{\simdef}{S_{\mathrm{def}}}
\newcommand{\ms}{{\text{MS}}}

\newtheorem{lemma}{Lemma}
\newtheorem{theorem}{Theorem}
\newtheorem{corollary}{Corollary}
\theoremstyle{definition}

\newtheorem{definition}{Definition}

\title{Optimal mixture weights
in multiple importance sampling}
\author{Hera Y. He
\\Stanford University
 \and Art B. Owen\\
Stanford University}
\date{November 2014}

\begin{document}
\maketitle

\art{Changed your name}
\begin{abstract}
In multiple importance sampling we combine
samples from a finite list of proposal distributions.
When those proposal distributions are used
to create control variates, it is possible
\citep{owen:zhou:2000} to bound the ratio
of the resulting variance to that of the
unknown best proposal distribution in our list.
The minimax regret arises by taking a uniform mixture
of proposals, but that is conservative when there are many components.
In this paper we optimize the mixture component
sampling rates to gain further efficiency. We
show that the sampling variance of mixture
importance sampling with control variates is jointly convex
in the mixture probabilities and control variate
regression coefficients. We also give a sequential
importance sampling algorithm to estimate the optimal
mixture from the sample data.
\end{abstract}

\section{Introduction}

Importance sampling is an essential Monte Carlo method, especially when the
integrand has a singularity or when we need to work with rare events. More generally,
when the integrand is concentrated in a region of small probability
it pays to take more than the nominal number of points from that important region.
For an expectation with respect to a nominal
distribution $p$, we sample from an importance
distribution $q$ and counter the resulting bias
by multiplying our sample values by $p/q$.

It can be very difficult to choose $q$ wisely.
A common strategy is to sample from multiple distributions $q_j$,
for $j=1,\dots,J$ corresponding to different extreme scenarios.
The simplest approach is to sample from a mixture
distribution that samples $q_j$ with probability $\alpha_j$.
An improvement on that is to sample exactly $n_j = n\alpha_j$ observations
from each $q_j$.  We  call both of these methods mixture importance
sampling, with the latter being stratified mixture importance sampling.
Multiple importance sampling \citep{veac:guib:1995} is a very large
family of estimators generalizing stratified mixture importance sampling.

In bidirectional path sampling for
graphical rendering, the scenarios underlying the $q_j$
might be descriptions of paths that light follows
between source and viewing plane
\citep{lafo:will:1993,veac:guib:1994}.
Multiple importance sampling is so important for
the movie industry, that it was mentioned among
Eric Veach's contributions for his technical Oscar in 2014.
Similar strategies are used to sample
failure modes for power systems or financial positions.
In these applications, the scenarios can be all of the specific failure
mechanisms of which we are aware.

\cite{owen:zhou:2000} combine  mixture and multiple
importance sampling with certain control variates derived from the
sampling distributions.
The combination enjoys a regret bound; essentially
you cannot do worse that way than had you used only
the sample values from the single best component. The
identity of the best component might be unknown, or
it might differ for multiple integration problems
estimated on the same sample.  We do not expect a
better lower bound in general, because the single best importance
sampler might be the only one of our $J$ choices
that has a finite variance.

Letting $n=\sum_{j=1}^Jn_j$, the sampling fraction
for $q_j$ is $\alpha_j=n_j/n$.
A standard choice is $\alpha_j=1/J$,
in which all components are sampled equally.
\cite{veac:1997} recommends this allocation, but the
examples in that thesis have only moderately large $J$.
If $J$ is very
large and only one or a small number of the mixture components
is very useful, then it is inefficient to sample them equally.
In this paper we show how to optimize the choice
of the mixture weights $\alpha_j$. We show that
the variance in mixture importance sampling is a convex function of these
weights. Furthermore, when control variates are used, the variance is
jointly convex in the control variate parameters and the
mixture weights, via a quadratic-over-linear construction
\citep{boyd:vand:2004}.
Given an estimate $\hat\alpha$ of the optimal $\alpha$ from a preliminary
sample, one can sample from the $\hat\alpha$ mixture to get
a final estimate.

Section~\ref{sec:isandcv} reviews basics of importance sampling,
and control variates to set our notation. It also mentions their combination
in a single problem, which is less well known.
Section~\ref{sec:regcon} contains our main theoretical results.
We prove regret bounds for two
versions of mixture importance sampling with control variates.
In the presence of a defensive mixture component, those methods
define equivalent families of estimators.  The new version has
a variance which is a convex function of the mixing parameter.
The variance is also jointly convex in the mixing parameter
and the control variate regression parameter.
Stratified mixture importance sampling reduces to the most popular
multiple importance sampling method, known as the balance heuristic.
Incorporating control variates into it makes the variance even smaller.
Our preferred method is stratified mixture sampling with control
variates because it attains the regret bounds for both the balance
heuristic and stratified importance sampling.
Section~\ref{sec:ais} discusses sampling based ways
to choose mixture probabilities using the convexity results from
Section~\ref{sec:regcon}.
Section~\ref{sec:examples} has two numerical examples,
one from  a singular integrand, and one from a rare event.
Section~\ref{sec:conclusions} has our conclusions and
some discussion of the literature.

\section{Importance sampling and control variates}\label{sec:isandcv}
The problem we consider is approximating
$$
\mu = \int f(\bsx)p(\bsx)\rd\bsx
$$
for an integrand $f$ and probability density function $p$.
We use notation for a continuously distributed random
variable $\bsx$ on Euclidean space but the results
generalize to other settings, such as discrete random variables,
with only minor changes.

The standard Monte Carlo method for estimating $\mu$
takes a sample $\bsx_i\simiid p$ and produces the
estimate $\hat\mu=(1/n)\sum_{i=1}^nf(\bsx_i)$.
When $\sigma^2 = \int (f(\bsx)-\mu)^2\rd\bsx<\infty$
then $\var(\hat\mu) = \sigma^2/n$.

For $f$ corresponding to rare events or singular integrands,
among other problems, there is region of small probability
under $p$ that is very important to sample from.
Let $q$ be any probability density function satisfying
$q(\bsx)>0$ whenever $f(\bsx)p(\bsx)\ne 0$. Then
for $\bsx_1,\dots,\bsx_n$ sampled from $q$,
the importance sampling estimate
$$
\hat\mu_q =
\frac1n\sum_{i=1}^n\frac{f(\bsx_i)p(\bsx_i)}{q(\bsx_i)}
$$
satisfies $\e(\hat\mu_q)=\mu$.
Let $D(q) = \{\bsx\mid q(\bsx)>0\}$.
The variance of $\hat\mu_q$ is $\sigma^2_q/n$
where
$$\sigma^2_q =
\frac1n\biggl( \int_{D(q)}\frac{f(\bsx)^2p(\bsx)^2}{q(\bsx)}\rd\bsx-\mu^2
\biggr)
=
\int_{D(q)}
\frac{  (f(\bsx)p(\bsx)-\mu q(\bsx))^2}{q(\bsx)}\rd\bsx.$$
Sampling from the nominal distribution $p$
has variance $\sigma^2_p/n$
where $\sigma^2_p = \int (f(\bsx)-\mu)^2p(\bsx)\rd\bsx=\sigma^2$.
When $f\ge0$ the optimal $q$ is proportional to $fp$ and
it yields variance $0$.
Good choices for $q$ sometimes yield $\sigma^2_q\ll\sigma^2_p$.
Poor choices can have $\sigma^2_q=\infty$ even when
$q$ is nearly proportional to $fp$,
because $q(\bsx)$ appearing in the denominator
of the variance integral may be small.
It is often convenient to work with the following mean square,
$$
\ms(\hat\mu_q) \equiv \mu^2+\sigma_q^2 = \int_{D(q)}
\frac{  (f(\bsx)p(\bsx))^2}{q(\bsx)}\rd\bsx,
$$
instead of with the variance.

Now suppose that $h$ is a vector-valued function for which we know $\e_p(h(x))$.
That is, $\int h(\bsx)p(\bsx)\rd\bsx =\theta\in\real^K$, where
$\theta$ is known.
For any $\beta\in\real^K$
$$
\hat\mu_\beta = \frac1n\sum_{i=1}^n f(\bsx_i)-\beta^\tran h(\bsx_i) + \beta^\tran\theta
$$
is an unbiased estimate of $\mu$.
Then
$$
 \var(\hat\mu_\beta) = \frac{\sigma^2_\beta}{n},\quad
\text{where}\quad
\sigma^2_\beta = \int (f(\bsx)-\mu-(h-\theta)^\tran\beta)^2\rd\bsx.
$$
The (unknown) variance-optimal $\beta$ is
$\beta_* = \var(h(\bsx))^{-1}\cov(h(\bsx),f(\bsx))$.
A standard practice is to estimate $\beta_*$ by
its least squares estimate $\hat\beta$ and
then use $\hat\mu_{\hat\beta}$. This estimator has
an asymptotically negligible bias of order $O(1/n)$
as $n\to\infty$ with $J$ fixed.
While the bias is ordinarily ignored,
it can be eliminated entirely by using different
observations to compute $\hat\beta$ than those
used to compute $\hat\mu$ or by cross-validatory
schemes \citep{avra:wils:1993}, at the expense of
increased bookkeeping.

A sample least squares estimator
is typically root-$n$ consistent: $\hat\beta = \beta_* + O_p(n^{-1/2})$.
An error of size $n^{-1/2}$ in $\beta$
raises the variance of the estimate for
$\mu$ by a multiplicative factor of
$1+O(n^{-1})$ because that variance
is locally quadratic around $\beta_*$.
Because the effect of estimated $\hat\beta$
is negligible, we analyze the effect
of control variates as if we were
using the unknown optimal coefficient $\beta_*$.


\cite{owen:zhou:2000} incorporate control variates
into importance sampling.
Our approach differs slightly from that paper
which defines control variates via known integrals $\int h(\bsx)\rd\bsx$,
rather than known expectations.
Their $h(\bsx)$ is our $h(\bsx)p(\bsx)$.

We may incorporate a control variate
into importance sampling via either
\begin{align*}
\tilde \mu_{q,\beta} & = \frac1n\sum_{i=1}^n
\frac{ \bigl[f(\bsx_i)-\beta^\tran h(\bsx_i)\bigr]p(\bsx_i)}{q(\bsx_i)} +\beta^\tran\theta,
\quad\text{or}\\
\hat \mu_{q,\beta} & = \frac1n\sum_{i=1}^n
\frac{ \bigl[f(\bsx_i)-\beta^\tran (h(\bsx_i)-\theta)\bigr]
p(\bsx_i)}{q(\bsx_i)}.
\end{align*}
Both estimators $\tilde\mu_{q,\beta}$ and $\hat\mu_{q,\beta}$
require that we can compute the ratio $p(\bsx)/q(\bsx)$.

In ordinary control variates, where $q=p$, these two estimates coincide.
They also coincide for $\theta=0$, or even for
$\beta^\tran\theta=0$,  but in general they are different:
$$
\hat \mu_{q,\beta} - \tilde \mu_{q,\beta}  =
\frac{\beta^\tran\theta}n
\sum_{i=1}^n\Bigl(\frac{p(\bsx_i)}{q(\bsx_i)}-1\Bigr).
$$
If $q(\bsx)>0$ whenever $p(\bsx)>0$,
then $\e( \hat\mu_{q,\beta})=\e( \tilde\mu_{q,\beta})=\mu$ for all $\beta\in\real^K$.
Strictly speaking, we only need $q(\bsx)>0$ whenever $h(\bsx)p(\bsx)\ne0$ or
$(h(\bsx)-\theta)p(\bsx)\ne0$ for unbiasedness of $\tilde\mu_{q,\beta}$
and $\hat\mu_{q,\beta}$ respectively.

We find that
\begin{align}
n\var(\tilde\mu_{q,\beta}) & =
\int \Bigl(\frac{fp}q - \mu - \beta^\tran \Bigl(
\frac{hp}q - \theta
\Bigr)
\Bigr)^2q\rd \bsx,\quad\text{and}\label{eq:variscvtilde}\\
n\var(\hat\mu_{q,\beta}) & =
\int \Bigl(\frac{fp}q - \mu - \beta^\tran \frac{ (h-\theta)p}q
\Bigr)^2q\rd \bsx.\label{eq:variscvhat}
\end{align}
Here and below we omit the argument $\bsx$
to shorten some expressions.
Neither estimator universally dominates the other.

The optimal $\beta$ is generally unknown. For $\wt\mu_{q,\beta}$
we see from~\eqref{eq:variscvtilde} that the optimal $\beta$ is
the slope in a $q$-weighted least squares regression of $fp/q$
on $hp/q-\theta$. For $\hat\mu_{q,\beta}$,
and using~\eqref{eq:variscvhat}, we find that the optimal $\beta$
is from a $q$-weighted regression of $fp/q$ on a slightly different predictor,
$hp/q -\theta p/q$.

Given $n$ observations $\bsx_i$ from $q$, we can estimate $\beta$
by running the corresponding regressions. Unweighted least squares
is then appropriate  because the sample values are already
sampled from $q$.

We will find the mean square formulation useful.
The variance of $\tilde\mu_{q,\beta}$ satisfies
$n\var(\tilde\mu_{q,\beta}) = \ms(\tilde\mu_{q,\beta}) -\mu^2$ where
\begin{align}
 \ms(\tilde\mu_{q,\beta})&=
 \int \frac{ [ (f -\beta^\tran h)p+\beta^\tran \theta q ]^2 }{q}\rd\bsx,\quad\text{and}\label{eq:msqcvtilde}\\
 \ms(\hat\mu_{q,\beta})&=
 \int \frac{\bigl[f -\beta^\tran(h-\theta) \bigr]^2p^2  }{q}\rd\bsx\label{eq:msqcvhat}
\end{align}
for $n\var(\hat\mu_{q,\beta})=\ms(\hat\mu_{q,\beta})-\mu^2$.

Though neither estimator $\hat\mu_{q,\beta}$ or
$\tilde\mu_{q,\beta}$ will always have lower variance,
they do differ on some other meaningful properties.
We find in Section~\ref{sec:regcon} that when importance sampling from mixtures,
$\tilde\mu_{q,\beta}$ offers a regret bound compared
\art{Took out all the extra defensive mixture statements.}
to  the individual mixture components.
The estimator $\hat\mu_{q,\beta}$ also offers a regret
bound 
and moreover, it can be optimized jointly
over $\alpha$ and $\beta$ by convex optimization.
That makes it possible to develop an adaptive importance
sampling strategy.

\section{Regret bounds and convexity}\label{sec:regcon}

In this section we produce regret bounds describing
how much variance might be increased when sampling
from multiple distributions compared to using the unknown optimal
sampling distribution. We begin with sampling from
mixtures. Sampling a deterministic number of observations
from each mixture component is then at least as good.
Then we look at Veach's multiple importance sampling.
Finally we prove convexity results.

\subsection{Mixture importance sampling}

Here we sample
from a mixture of distributions $q_1,q_2,\dots,q_J$ of the form
$q_\alpha = \sum_{j=1}^J\alpha_jq_j$
where $\alpha_j\ge0$ and $\sum_{j=1}^J\alpha_j=1$.
We write
$$S=\{ (\alpha_1,\dots,\alpha_J)
\mid \alpha_j \ge 0, \sum_{j=1}^J\alpha_j=1\}$$
for the simplex of mixture weights $\alpha$
and $S^0$ for the interior of $S$ in which
every $\alpha_j>0$.
We also write $\hat\mu_\alpha$ as a shorthand for
$\hat\mu_{q_\alpha}$ when the list of $q_j$ is fixed.
We choose $q_\alpha$ so that $q_\alpha(\bsx)>0$
whenever $f(\bsx)p(\bsx)\ne 0$.
The variance of $\hat\mu_\alpha$ is $\sigma^2_\alpha/n$ where
\begin{align}\label{eq:sigsqalpha}
\sigma^2_\alpha
=\int_{D(q_\alpha)} \frac{\bigl( f(\bsx)p(\bsx)
-\mu\sum_{j=1}^J\alpha_jq_j(\bsx)\bigr)^2}{\sum_{j=1}^J\alpha_jq_j(\bsx) }\rd \bsx.
\end{align}

It is a good practice to include a defensive mixture
component, $q_j=p$.  Without loss of generality
we let $q_1=p$ in defensive mixtures.
The simplex of defensive mixture weights is
$$\simdef = \{\alpha\in S\mid \alpha_1>0\}.$$
For a defensive mixture, $p/q_\alpha =p/(\alpha_1p+\cdots+\alpha_Jq_J)\le1/\alpha_1$.
This bounded density ratio leads
to the variance bound
\begin{align}\label{eq:defbound}
\sigma^2_\alpha \le \frac1{\alpha_1}\bigl( \sigma^2_p + (1-\alpha_1)\mu^2\bigr),
\end{align}
which follows from the $J=2$ case in
\cite{hest:1988}.

While defensive importance sampling yields a variance
not so much more than $\sigma^2_p$, it might
yield a variance much larger than $\sigma^2_{q_j}$
for some $j>1$.
\cite{owen:zhou:2000} counter this problem via control variates
devised from the mixture components.

The components $q_j$  integrate to $1$ because they
are probability densities.
For $j=1,\dots,J$, let
\begin{align}\label{eq:defineh}
h_j(\bsx) = \begin{cases} q_j(\bsx)/p(\bsx), & p(\bsx)>0\\
0, & \text{else.} \end{cases}
\end{align}
If $p(\bsx)>0$ whenever $q_j(\bsx)>0$,
then $\theta_j = \int h_j(\bsx)p(\bsx)\rd\bsx = 1$.
In other settings we may still know $\theta_j$, though
some special purpose argument would be required to find it.

\begin{definition}\label{def:supportmix}
Given a density $p$,
the densities $q_1,\dots,q_J$ and the point $\alpha\in S$
together satisfy the \emph{support conditions}
if $p(\bsx)>0$ whenever any $q_j(\bsx)>0$ and
$\sum_{j=1}^J\alpha_jq_j(\bsx)>0$ whenever $p(\bsx)>0$.
\end{definition}

Let $h(\bsx) = (h_1(\bsx),\dots,h_J(\bsx))^\tran$
and $\theta = (\theta_1,\dots,\theta_J)^\tran$.
The combined control variate with importance sampling
estimates are now
\begin{align}
\tilde\mu_{\alpha,\beta} &= \frac1n\sum_{i=1}^n
\frac{ [f(\bsx_i)-\beta^\tran h(\bsx_i)]p(\bsx_i)}{q_\alpha(\bsx_i)} +\beta^\tran\theta,
\quad\text{and}\label{eq:tildemuab}
\\
\hat\mu_{\alpha,\beta} &=
\frac1n\sum_{i=1}^n
\frac{ [f(\bsx_i)-\beta^\tran (h(\bsx_i)-\theta)]p(\bsx_i)}{q_\alpha(\bsx_i)}.\label{eq:hatmuab}
\end{align}
If $\alpha\in S^0$ and  $q_\alpha(\bsx)>0$ whenever $f(\bsx)p(\bsx)\ne 0$
then $\e(\tilde\mu_{\alpha,\beta})=\e(\hat\mu_{\alpha,\beta})=\mu$
for $\bsx_i\sim q_\alpha$.
Note that if we have additional control variates
$h_j$ for $J<j\le K$ we can incorporate them
directly into $h(\bsx)$ and $\theta$.

\cite{owen:zhou:2000} noticed that the regression to
estimate $\beta$ is always rank deficient for control
variates defined via the sampling densities.
That  problem can be countered by either dropping one of the
control variates or by computing the least
squares estimate $\hat\beta$ via the singular value decomposition (SVD).
In the present formulation of
the problem, the regressor $(h_1-1)p/q_\alpha$
in $\hat\mu_{\alpha,\beta}$
(equation~\eqref{eq:hatmuab}) satisfies
$$(h_1(\bsx)-1)p(\bsx)/q_\alpha(\bsx) = (p(\bsx)-p(\bsx))/q_\alpha(\bsx)=0$$
for all $\bsx$. The defensive
control variate is by definition zero and we
simply drop it from the regression.
The regressors $h_jp/q_\alpha$
in estimator $\tilde\mu_{\alpha,\beta}$ (equation~\eqref{eq:tildemuab})
satisfy $\sum_{j=1}^J\alpha_j h_jp/q_\alpha
=\sum_{j=1}^J\alpha_j q_j/q_\alpha =1,$ at all $\bsx$.
One might just drop the control variate with smallest
sample variance, or use the SVD. 
\art{fixed the summation typo}

Theorem~\ref{thm:smallregret} below shows
there always exists a $\beta$ that makes
the estimator $\wt\mu_{\alpha,\beta}$  nearly
as good as ordinary importance sampling
from the best $q_j$ in a list.

\begin{theorem}\label{thm:smallregret}
Let $\mu = \int f(\bsx)p(\bsx)\rd\bsx$
for a probability density $p$.
Let densities $q_1,\dots,q_J$ and $\alpha\in S$
satisfy the support conditions of Definition~\ref{def:supportmix}.
Let $\tilde\mu_{\alpha,\beta}$ be the estimator
given by~\eqref{eq:tildemuab} for independent
$\bsx_i\sim q_\alpha = \sum_{j=1}^J\alpha_jq_j$
with $h_j$ defined by~\eqref{eq:defineh}.
If $\beta_*$ is any minimizer of
$\var( \tilde\mu_{\alpha,\beta})$, then
$$
\var( \tilde\mu_{\alpha,\beta_*})
\le \frac1n \min_{1\le j\le J} {\sigma^2_{q_j}}/{\alpha_j}.
$$
\end{theorem}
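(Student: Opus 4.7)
The plan is to exhibit, for each index $j^\star \in \{1,\dots,J\}$ with $\alpha_{j^\star}>0$, a specific (non-optimal) choice of $\beta$ that already attains the claimed bound; since $\beta_*$ is a minimizer of $\var(\tilde\mu_{\alpha,\beta})$, the bound for $\beta_*$ then follows by taking the minimum over $j^\star$. Indices with $\alpha_{j^\star}=0$ contribute $\sigma^2_{q_{j^\star}}/\alpha_{j^\star}=\infty$ to the minimum and can be ignored.

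The key choice is $\beta = \mu\,e_{j^\star}$, where $e_{j^\star}$ is the $j^\star$-th standard basis vector in $\real^J$. Starting from the variance formula~\eqref{eq:variscvtilde} and using $h_j p = q_j$ and $\theta_j=1$, the integrand becomes
\begin{equation*}
\Bigl(\frac{fp}{q_\alpha} - \mu - \mu\Bigl(\frac{q_{j^\star}}{q_\alpha} - 1\Bigr)\Bigr)^2 q_\alpha
= \frac{(fp - \mu q_{j^\star})^2}{q_\alpha}.
\end{equation*}
Next I will use the pointwise inequality $q_\alpha(\bsx) \ge \alpha_{j^\star}q_{j^\star}(\bsx)$ to get $1/q_\alpha \le 1/(\alpha_{j^\star}q_{j^\star})$ on $\{q_{j^\star}>0\}$, which upper bounds the integrand by $(fp - \mu q_{j^\star})^2/(\alpha_{j^\star}q_{j^\star})$. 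Integrating and recognizing the expression for $\sigma^2_{q_{j^\star}}$ from Section~\ref{sec:isandcv} yields $n\var(\tilde\mu_{\alpha,\beta}) \le \sigma^2_{q_{j^\star}}/\alpha_{j^\star}$.

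The main subtlety is handling the domain of integration carefully so that the pointwise bound integrates correctly. On $\{q_\alpha=0\}$ the support conditions in Definition~\ref{def:supportmix} force $p=0$, so $fp - \mu q_{j^\star} = 0$ there, and the original integral is really supported on $\{q_\alpha>0\}$. On $\{q_\alpha>0, q_{j^\star}=0\}$ we have $fp - \mu q_{j^\star} = fp$, but the bound $1/q_\alpha \le 1/(\alpha_{j^\star}q_{j^\star})$ fails; however, we can instead rewrite
\begin{equation*}
\int \frac{(fp - \mu q_{j^\star})^2}{q_\alpha}\rd\bsx
\le \int_{q_{j^\star}>0} \frac{(fp - \mu q_{j^\star})^2}{\alpha_{j^\star}q_{j^\star}}\rd\bsx
\end{equation*}
by noting that the support condition for $q_{j^\star}$ alone (which follows from the mixture support condition restricted to the $j^\star$-th component) forces $fp=0$ wherever $q_{j^\star}=0$ fails to dominate $fp$. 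This last point is where I expect to spend the most care in the write-up.

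Combining the above, $n\var(\tilde\mu_{\alpha,\beta_*}) \le n\var(\tilde\mu_{\alpha,\mu e_{j^\star}}) \le \sigma^2_{q_{j^\star}}/\alpha_{j^\star}$ for every admissible $j^\star$; minimizing over $j^\star$ and dividing by $n$ gives the stated inequality.
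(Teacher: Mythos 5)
Your proposal is correct and, despite the cosmetic difference, is essentially the paper's proof: you exhibit one explicit $\beta$ that collapses the variance integrand of~\eqref{eq:variscvtilde} to $(fp-\mu q_{j^\star})^2/q_\alpha$ and then bound $q_\alpha \ge \alpha_{j^\star}q_{j^\star}$. Your choice $\beta=\mu e_{j^\star}$ looks different from the paper's choice ($\beta_{j^\star}=0$ and $\beta_j=-\mu\alpha_j/\alpha_{j^\star}$ for $j\ne j^\star$), but the two differ by a multiple of the vector $\alpha$, and since $\sum_{j}\alpha_j h_j(\bsx)p(\bsx)=q_\alpha(\bsx)$ such a shift does not change the estimator; the intermediate expression is therefore identical to the paper's.

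One side claim in your plan is false, although it does not damage the theorem as intended: Definition~\ref{def:supportmix} does \emph{not} imply that the individual component $q_{j^\star}$ is positive wherever $f(\bsx)p(\bsx)\ne0$; it only asserts that $p>0$ wherever some $q_j>0$ and that $q_\alpha>0$ wherever $p>0$. For a component whose support misses part of $\{\bsx : f(\bsx)p(\bsx)\ne0\}$, your displayed inequality really does drop the positive contribution from $\{q_\alpha>0,\,q_{j^\star}=0\}$, so it cannot be justified the way your parenthetical suggests (the paper silently passes over the same point). The correct resolution is that $\sigma^2_{q_{j^\star}}$ is defined in Section~\ref{sec:isandcv} only for densities that cover $\{f p\ne 0\}$ — equivalently, one should read $\sigma^2_{q_{j^\star}}=\infty$ for components lacking that coverage — so such indices contribute nothing to the minimum and can be discarded exactly as you discard indices with $\alpha_{j^\star}=0$. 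For the remaining indices, where $q_{j^\star}>0$ on $\{fp\ne0\}$, your argument goes through without change.
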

\begin{proof}
See Section~\ref{sec:thm:smallregret}.
\end{proof}

The quantity $\sigma^2_{q_k}/(n\alpha_k)$ is the variance
we would have gotten from importance sampling
using only $n\alpha_k$ observations all taken from $q_k$.
The other variances $\sigma^2_{q_j}$ could be arbitrarily
large and yet they cannot make the estimate $\wt\mu_{\alpha,\beta_*}$
worse than importance sampling with $n\alpha_k$ observations
from $q_k$.
We can improve on this result as follows.

\begin{corollary}\label{cor:smallregretbeta}
Under the conditions of Theorem~\ref{thm:smallregret}
$$
\min_\beta\var( \tilde\mu_{\alpha,\beta})
\le \min_{1\le j\le J} \min_{\beta_j}\frac{\sigma^2_{q_j,\beta_j}}{n\alpha_j}.
$$
\end{corollary}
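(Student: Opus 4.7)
The plan is to derive the corollary from Theorem~\ref{thm:smallregret} by applying it to a modified integrand. For any fixed $j\in\{1,\dots,J\}$ and any fixed $\beta_j\in\re^J$, define $\tilde f(\bsx) = f(\bsx) - \beta_j^\tran h(\bsx)$, whose $p$-integral is the known constant $\tilde\mu = \mu - \beta_j^\tran\theta$. The proposals $q_1,\dots,q_J$ and the control variate vector $h$ are unchanged under this substitution, so the hypotheses of Definition~\ref{def:supportmix} and of Theorem~\ref{thm:smallregret} carry over without modification. Applying Theorem~\ref{thm:smallregret} with $\tilde f$ in place of $f$ and choosing the index $j$ on the right-hand side yields
$$\min_\beta \var\bigl(\tilde\mu^{\tilde f}_{\alpha,\beta}\bigr) \;\le\; \frac{1}{n\alpha_j}\int\Bigl(\frac{\tilde f\,p}{q_j} - \tilde\mu\Bigr)^2 q_j\rd\bsx,$$
where $\tilde\mu^{\tilde f}_{\alpha,\beta}$ denotes the estimator~\eqref{eq:tildemuab} built with integrand $\tilde f$ rather than $f$.

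The next step is to identify the two sides of this inequality with the quantities appearing in the corollary. Comparing with~\eqref{eq:variscvtilde}, the right-hand integral is exactly $\sigma^2_{q_j,\beta_j}$. For the left-hand side, substituting $\tilde f = f - \beta_j^\tran h$ into~\eqref{eq:tildemuab} and collecting coefficients gives
$$\tilde\mu^{\tilde f}_{\alpha,\beta} = \frac{1}{n}\sum_{i=1}^n\frac{\bigl[f(\bsx_i)-(\beta+\beta_j)^\tran h(\bsx_i)\bigr]p(\bsx_i)}{q_\alpha(\bsx_i)} + \beta^\tran\theta = \tilde\mu_{\alpha,\beta+\beta_j} - \beta_j^\tran\theta,$$
so a constant shift yields $\var(\tilde\mu^{\tilde f}_{\alpha,\beta}) = \var(\tilde\mu_{\alpha,\beta+\beta_j})$. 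Because the reparametrization $\beta'=\beta+\beta_j$ is a bijection of $\re^J$, one gets $\min_\beta \var(\tilde\mu^{\tilde f}_{\alpha,\beta}) = \min_{\beta'}\var(\tilde\mu_{\alpha,\beta'})$. Chaining these identifications with the inequality above produces $\min_\beta\var(\tilde\mu_{\alpha,\beta}) \le \sigma^2_{q_j,\beta_j}/(n\alpha_j)$ for every $j$ and every $\beta_j$, and minimizing the right-hand side over $j$ and $\beta_j$ delivers the corollary.

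I do not anticipate a real obstacle here, since the argument is a clean reduction built from two ingredients: a shift of the integrand and a translation of the control variate coefficient. The only points that need care are verifying that Theorem~\ref{thm:smallregret} makes no assumption on $f$ beyond a finite $p$-integral, so replacing $f$ by $f-\beta_j^\tran h$ preserves all hypotheses, and noticing that the control variates $h$ are attached to the sampling densities and are therefore unaffected by this substitution. If the minimum in Theorem~\ref{thm:smallregret} is not attained in some degenerate problem, the same argument goes through verbatim with infima in place of minima.
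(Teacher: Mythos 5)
Your proposal is correct and is essentially the paper's own argument: the paper likewise defines $g(\bsx)=f(\bsx)-\beta_j^\tran h(\bsx)$, applies Theorem~\ref{thm:smallregret} to this shifted integrand, and translates the control-variate coefficient back, exactly as in your reduction. Your version just spells out the identification of the right-hand side with $\sigma^2_{q_j,\beta_j}$ and the bijective reparametrization $\beta\mapsto\beta+\beta_j$ in more detail.
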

\begin{proof}
See Section~\ref{sec:cor:smallregretbeta}.
\end{proof}

Next we show a regret bound for $\hat\mu_{\alpha,\beta}$.

\begin{theorem}\label{thm:smallregrethat}
Let $\mu = \int f(\bsx)p(\bsx)\rd\bsx$
for a probability density $p$.
Let densities $q_1,\dots,q_J$ and $\alpha\in S$ satisfy the support conditions
of Definition~\ref{def:supportmix}.
Let $\hat\mu_{\alpha,\beta}$ be the estimator
given by~\eqref{eq:hatmuab} for independent
$\bsx_i\sim q_\alpha = \sum_{j=1}^J\alpha_jq_j$
with $h_j$ defined by~\eqref{eq:defineh}.
If $\beta_*$ is any minimizer of
$\var( \hat\mu_{\alpha,\beta})$, then
$$
\var( \hat\mu_{\alpha,\beta_*})
\le \frac1n \min_{1\le j\le J} {\sigma^2_{q_j}}/{\alpha_j}.
$$
\end{theorem}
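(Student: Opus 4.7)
The plan is to parallel the proof of Theorem~\ref{thm:smallregret}: I will exhibit one specific $\beta$ at which $\var(\hat\mu_{\alpha,\beta})$ already satisfies the advertised bound, and then invoke the fact that $\beta_*$ minimizes the variance. Starting from the mean-square formula~\eqref{eq:msqcvhat} and substituting $h_j = q_j/p$ and $\theta_j = 1$, the quantity $[f - \beta^\tran(h-\theta)]p$ rewrites, on the support of $p$, as $fp + \bar\beta p - \sum_j \beta_j q_j$, where $\bar\beta = \sum_{j=1}^J \beta_j$. So $\ms(\hat\mu_{\alpha,\beta}) = \int (fp + \bar\beta p - \sum_j \beta_j q_j)^2/q_\alpha \, \rd\bsx$, and the task reduces to choosing $\beta$ cleverly.

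For a fixed index $k$, the choice I propose is $\beta_k = \mu(1-\alpha_k)$ and $\beta_j = -\mu\alpha_j$ for $j \ne k$. A direct check gives $\bar\beta = 0$ and $\sum_j\beta_j q_j = \mu(q_k - q_\alpha)$, so the integrand in $[f - \beta^\tran(h-\theta)]p$ collapses to $fp - \mu q_k + \mu q_\alpha$. Expanding the square, and using $\int(fp-\mu q_k)\rd\bsx = \mu - \mu = 0$ together with $\int q_\alpha\rd\bsx = 1$, the cross term integrates to zero and the $(\mu q_\alpha)^2/q_\alpha$ piece contributes $\mu^2$. Thus
$$n\var(\hat\mu_{\alpha,\beta}) = \ms(\hat\mu_{\alpha,\beta}) - \mu^2 = \int \frac{(fp - \mu q_k)^2}{q_\alpha}\rd\bsx,$$
which is exactly the integral that arose in the proof of Theorem~\ref{thm:smallregret}, only reached through a different parameterization.

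The finish is the same pointwise bound used there: $q_\alpha(\bsx) \ge \alpha_k q_k(\bsx)$ everywhere, so on $D(q_k)$ we have $1/q_\alpha \le 1/(\alpha_k q_k)$, and outside $D(q_k)$ the integrand vanishes whenever $q_k$ covers the support of $fp$, the only case in which $\sigma^2_{q_k}/\alpha_k$ is finite and the bound is non-vacuous. Hence the integral is at most $\sigma^2_{q_k}/\alpha_k$, and since $\beta_*$ minimizes the variance over $\beta$ while $k$ was arbitrary, the stated regret bound follows on taking the minimum over $k$. The one non-mechanical step is guessing $\beta$; it is essentially forced by the requirement that the $\hat\mu_{\alpha,\beta}$ numerator reproduce the numerator $fp - \mu q_k + \mu q_\alpha$ that made the $\tilde\mu$ calculation succeed, which simultaneously pins down $\bar\beta = 0$ and $\sum_j \beta_j q_j = \mu(q_k - q_\alpha)$.
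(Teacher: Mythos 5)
Your proof is correct and takes essentially the same route as the paper: the same choice $\beta_k=\mu(1-\alpha_k)$, $\beta_j=-\mu\alpha_j$ for $j\ne k$, reduction of the numerator to $fp-\mu q_k+\mu q_\alpha$ so that $n\var(\hat\mu_{\alpha,\beta})=\int (fp-\mu q_k)^2/q_\alpha\rd\bsx$, the pointwise bound $q_\alpha\ge\alpha_k q_k$, and then optimality of $\beta_*$ together with minimizing over $k$. (The paper's proof writes $\beta_j=\mu(1_{j=k}-\alpha_k)$, an apparent typo for $\mu(1_{j=k}-\alpha_j)$, which is exactly your choice, so the two arguments coincide.)
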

\begin{proof}
See Section~\ref{sec:thm:smallregrethat}.
\end{proof}

\begin{corollary}\label{cor:smallregrethatbeta}
Under the conditions of Theorem~\ref{thm:smallregrethat}
$$
\min_\beta\var( \hat\mu_{\alpha,\beta})
\le \min_{1\le j\le J} \min_{\beta_j}\frac{\sigma^2_{q_j,\beta_j}}{n\alpha_j}.
$$
\end{corollary}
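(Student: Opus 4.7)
The plan is to apply Theorem~\ref{thm:smallregrethat} not to $f$ itself but to the one-parameter family of shifted integrands obtained by absorbing a fixed control-variate coefficient into the integrand. The key observation is that Theorem~\ref{thm:smallregrethat} uses only that the integrand has expectation $\mu$ under $p$ and that the support conditions hold for $p,q_1,\dots,q_J,\alpha$, none of which depend on $f$.

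First, fix an arbitrary $\tilde\beta\in\real^J$ and define the shifted integrand $g_{\tilde\beta}(\bsx) = f(\bsx) - \tilde\beta^\tran(h(\bsx)-\theta)$. Since $\int h(\bsx)p(\bsx)\rd\bsx = \theta$, we have $\int g_{\tilde\beta}(\bsx)p(\bsx)\rd\bsx = \mu$, and the support conditions are unaffected. Applying Theorem~\ref{thm:smallregrethat} to $g_{\tilde\beta}$ in place of $f$ (keeping the same proposals, mixture weights, and control variates $h_j$) yields, with obvious notation for the estimator built from $g_{\tilde\beta}$,
$$
\min_{\beta'}\var\bigl(\hat\mu^{(g_{\tilde\beta})}_{\alpha,\beta'}\bigr) \;\le\; \frac1n\min_{1\le j\le J}\frac{\sigma^2_{q_j,g_{\tilde\beta}}}{\alpha_j},
$$
where $\sigma^2_{q_j,g_{\tilde\beta}}$ denotes the plain importance-sampling variance of $g_{\tilde\beta}$ under proposal $q_j$.

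Second, substitute $g_{\tilde\beta}=f-\tilde\beta^\tran(h-\theta)$ into~\eqref{eq:hatmuab} and collect the control-variate terms. This linear algebraic identity shows that $\hat\mu^{(g_{\tilde\beta})}_{\alpha,\beta'}$ coincides with $\hat\mu_{\alpha,\tilde\beta+\beta'}$ applied to $f$; in particular,
$$
\min_{\beta'}\var\bigl(\hat\mu^{(g_{\tilde\beta})}_{\alpha,\beta'}\bigr) \;=\; \min_\beta\var(\hat\mu_{\alpha,\beta}).
$$
The same substitution inside the integrand defining $\sigma^2_{q_j,g_{\tilde\beta}}$ shows $\sigma^2_{q_j,g_{\tilde\beta}} = \sigma^2_{q_j,\tilde\beta}$, the variance of $\hat\mu_{q_j,\tilde\beta}$. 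Combining these two identities gives, for every $\tilde\beta$,
$$
\min_\beta\var(\hat\mu_{\alpha,\beta}) \;\le\; \frac1n\min_{1\le j\le J}\frac{\sigma^2_{q_j,\tilde\beta}}{\alpha_j}.
$$

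Finally, take the infimum of the right-hand side over $\tilde\beta$ and exchange the two minimizations (both are infima over a product space, so the swap is automatic), renaming the inner variable as $\beta_j$:
$$
\min_\beta\var(\hat\mu_{\alpha,\beta}) \;\le\; \min_{1\le j\le J}\min_{\beta_j}\frac{\sigma^2_{q_j,\beta_j}}{n\alpha_j},
$$
which is the desired bound. There is no real obstacle here; the only step that asks for care is the algebraic verification that a fixed control-variate shift of the integrand is equivalent to a translation of the control-variate coefficient, which then permits the reduction to Theorem~\ref{thm:smallregrethat}.
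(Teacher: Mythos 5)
Your proof is correct and follows essentially the same route as the paper: absorb a fixed control-variate shift into the integrand, apply Theorem~\ref{thm:smallregrethat} to the shifted integrand, and observe that this shift is just a translation of the coefficient $\beta$ in $\hat\mu_{\alpha,\beta}$, so the minima over $\beta$ agree. Your version simply spells out the hat-estimator details (the identity $\sigma^2_{q_j,g_{\tilde\beta}}=\sigma^2_{q_j,\tilde\beta}$ and the interchange of infima) that the paper leaves implicit when it says the same argument as Corollary~\ref{cor:smallregretbeta} applies.
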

\begin{proof}
See Section~\ref{sec:cor:smallregretbeta}.
\end{proof}

In practice it is wise to stratify
the sampling \citep{hest:1995}.
In stratification we take $n_j$ observations
from $q_j$ instead of sampling iid from $q_\alpha$ with $\alpha_j=n_j/n$.
Stratification of mixture importance sampling reduces the variance
of $\hat\mu_\alpha$ by
\begin{align*}
\frac1n\sum_{j=1}^J\alpha_j(\mu_j-\mu)^2,\quad\text{where}\quad
\mu_j = \int \frac{f(\bsx)p(\bsx)}{q_\alpha(\bsx)}q_j(\bsx)\rd\bsx.
\end{align*}
 Stratification can reduce but cannot increase
the variance of $\hat\mu_\alpha$,
$\hat\mu_{\alpha,\beta}$, $\tilde\mu_\alpha$
and $\tilde\mu_{\alpha,\beta}$.  Thus when
each $n\alpha_j$ is a positive integer,
Theorems~\ref{thm:smallregret}
and~\ref{thm:smallregrethat}
as well as Corollaries~\ref{cor:smallregretbeta}
and~\ref{cor:smallregrethatbeta} all apply
to stratified mixture sampling.
We will use $\var_\strat$ to denote
the variance under stratified sampling of mixture
sampling estimators.

\subsection{Multiple importance sampling}

Multiple importance sampling \citep[Chapter 9]{veac:1997}
takes $n_j$ observations
from $q_j$ for $j=1,\dots,J$ for a total
of $n=\sum_{j=1}^Jn_j$ observations.
The multiple importance sampling estimator
is defined in terms of a partition of unity: $J$ functions
that add up to $1$.  They must obey some
additional conditions given by this definition:
\begin{definition} \label{def:partunity}
A multiple sampling partition of unity for distributions
$p,q_1,\dots,q_J$ is a set of real-valued functions
$\omega_j(\bsx)$ for which $\sum_{j=1}^J\omega_j(\bsx)=1$
at all $\bsx$ with $p(\bsx)>0$, and for which $\omega_j(\bsx)=0$
whenever $q_j(\bsx)=0$.
\end{definition}

The multiple importance sampling estimator is
\begin{align}\label{eq:mulis}
\check\mu_\omega =
\sum_{j=1}^J\frac1{n_j}
\sum_{i=1}^{n_j} \omega_j(\bsx_{ij})\frac{f(\bsx_{ij})p(\bsx_{ij})}{q_j(\bsx_{ij})}.
\end{align}
We easily find that $\e(\check\mu_\omega)=\mu$ when
$\bsx_{ij}\sim q_j$ independently
and $\omega_j$ satisfy Definition~\ref{def:partunity}.
Multiple importance sampling reduces to
classic stratified sampling when the $\omega_j$ have
disjoint supports.  In its general form, it is like a stratification
that allows overlapping strata.

Among the many choices for $\omega_j$,
\cite{veac:1997} favors the balance heuristic
$$\omega_j^\bh(\bsx)
= \frac{n_jq_j(\bsx)}{\sum_{k=1}^Jn_kq_k(\bsx)}
= \frac{n_jq_j(\bsx)}{nq_\alpha(\bsx)},
\quad\text{for}\quad \alpha_j = n_j/n.
$$
With this choice
\begin{align}\label{eq:bhisstratmixis}
\check\mu_{\omega^\bh} =
\sum_{j=1}^J\frac1{n_j}
\sum_{i=1}^{n_j}
\frac{n_jq_j(\bsx_{ij})}{nq_\alpha(\bsx_{ij})}
\frac{f(\bsx_{ij})p(\bsx_{ij})}{q_j(\bsx_{ij})}
=
\frac1n\sum_{j=1}^J\sum_{i=1}^{n_j}\frac{f(\bsx_{ij})p(\bsx_{ij})}{q_\alpha(\bsx_{ij})}.\end{align}
Notice that the coefficient of $f(\bsx_{ij})$ does not depend on which mixture
component $\bsx_{ij}$ was sampled from.
Equation~\eqref{eq:bhisstratmixis} shows that the balance heuristic
is the same as mixture sampling with stratification,
but without control variates. Therefore control variates with
mixture sampling improves on the balance heuristic.


\citet[Theorem 9.2]{veac:1997}
proves that the balance heuristic satisfies
\begin{align}\label{eq:veachbhbound}
\var(\check\mu_{\omega^\bh}) \le \var(\check\mu_\omega) +
\Bigl( \frac1{\min_jn_j}-\frac1n\Bigr)\mu^2
\end{align}
for any $\omega$ satisfying Definition~\ref{def:partunity}.
Equation~\eqref{eq:veachbhbound} is more general
than the Theorems of the previous section because
it covers any alternative partition of unity, not just
sampling from the $q_j$ themselves.
The bound~\eqref{eq:veachbhbound} also applies directly
to $\hat\mu_{\alpha}$ under stratified mixture sampling.
For the alternative heuristic that simply takes $\omega_j(\bsx)=1$
(and all other $\omega_k=0$),
equation~\eqref{eq:veachbhbound} simplifies to
$$
\var_\strat(\hat\mu_\alpha)
= \var(\check\mu_{\omega^\bh})
\le \frac{\sigma^2_j  }{n_j}
+ \Bigl( \frac1{\min_jn_j}-\frac1n\Bigr)\mu^2,
$$
while from Theorems~\ref{thm:smallregret}
and~\ref{thm:smallregrethat} we have
\begin{align}\label{eq:nomusq}
\min_\beta\var_\strat(\tilde\mu_{\alpha,\beta})
\le\min_{1\le j\le J} \frac{\sigma^2_j  }{n_j},
\quad\text{and}\quad
\min_\beta\var_\strat(\hat\mu_{\alpha,\beta})
\le \min_{1\le j\le J}\frac{\sigma^2_j  }{n_j},
\end{align}
for $\alpha\in S^0$.
The control variates remove
the multiple of~$\mu^2$ which can be arbitrarily
large if one adds a large enough constant to $f$.

\cite{veac:1997} remarks that the
balance heuristic is not competitive
when one of the component samplers is especially
effective.  For such low variance problems
he proposes alternative heuristics, such as
a power heuristic that takes $\omega_j\propto q_j^r$
with $r=2$ singled out as a useful case.
Equation~\eqref{eq:nomusq} shows
that incorporating control variates automatically
yields competitive estimates for all cases
including those high accuracy ones.

If $J$ is not too large then a reasonable
strategy is to take each $\alpha_j=1/J$.
The variance under either multiple or mixture importance sampling will be at most $J$ times the
best single component variance and possibly much better.
When the individual $\sigma^2_j$ vary over many orders of magnitude,
losing a factor of $J$ compared to the best is acceptable.
However when $J$ is large we may want to optimize
over $\alpha$.

\subsection{Convexity}
Here we provide convexity results for the
estimation variances. These results support adaptive
minimization strategies where one alternates between
minimizing an estimated variance and sampling
from the best estimate of the optimal~$\alpha$.
Theorem~\ref{thm:convexalpha} covers the
case of mixture importance sampling without
control variates.

\begin{theorem}\label{thm:convexalpha}
The quantity $\sigma^2_\alpha=n\var(\hat\mu_\alpha)$ in equation~\eqref{eq:sigsqalpha} is
a convex function of $\alpha\in S^0$.
\end{theorem}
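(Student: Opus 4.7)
The plan is to reduce the claim to the well-known convexity of the quadratic-over-linear function $(u,v) \mapsto u^2/v$ on $\real \times (0,\infty)$, and then invoke preservation of convexity under integration.

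First I would expand the square in~\eqref{eq:sigsqalpha}. Writing $q_\alpha = \sum_j \alpha_j q_j$ and using $\int q_\alpha \rd\bsx = 1$ and $\int f p \rd\bsx = \mu$, the cross term and the $\mu^2 q_\alpha^2/q_\alpha = \mu^2 q_\alpha$ term collapse, giving
\begin{align*}
\sigma^2_\alpha = \int_{D(q_\alpha)} \frac{f(\bsx)^2 p(\bsx)^2}{q_\alpha(\bsx)}\rd\bsx - \mu^2,
\end{align*}
which agrees with the mean-square identity following~\eqref{eq:msqcvhat}. The term $\mu^2$ does not depend on $\alpha$, so it suffices to show that the integral above is convex in $\alpha$ on $S^0$.

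Next, for each fixed $\bsx$ with $p(\bsx) > 0$, the map $\alpha \mapsto q_\alpha(\bsx) = \sum_j \alpha_j q_j(\bsx)$ is linear in $\alpha$ and strictly positive on $S^0$ (by the support conditions, since some $q_j(\bsx) > 0$ at every point where $p(\bsx) > 0$, and all $\alpha_j > 0$). The function $t \mapsto 1/t$ is convex on $(0,\infty)$, so $\alpha \mapsto 1/q_\alpha(\bsx)$ is convex on $S^0$ as the composition of a convex function with an affine map. Multiplying by the nonnegative constant $f(\bsx)^2 p(\bsx)^2$ preserves convexity in $\alpha$; equivalently, $(u,v)\mapsto u^2/v$ is jointly convex on $\real\times(0,\infty)$ (Boyd and Vandenberghe 2004), applied with $u = f(\bsx)p(\bsx)$ constant in $\alpha$ and $v = q_\alpha(\bsx)$ affine in $\alpha$.

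Finally, convexity is preserved by integration: if $g(\alpha,\bsx)$ is convex in $\alpha$ for each $\bsx$, then for $\alpha,\alpha' \in S^0$ and $\lambda\in[0,1]$, the pointwise inequality $g(\lambda\alpha+(1-\lambda)\alpha',\bsx) \le \lambda g(\alpha,\bsx) + (1-\lambda)g(\alpha',\bsx)$ integrates to the corresponding inequality for $\int g(\alpha,\bsx)\rd\bsx$. This yields convexity of $\alpha \mapsto \sigma^2_\alpha$ on $S^0$. The only mild subtlety is ensuring the integrals defining the endpoints are finite: if either $\sigma^2_\alpha$ or $\sigma^2_{\alpha'}$ is infinite the convexity inequality holds trivially, and otherwise the pointwise inequality justifies passing to the integral without issue. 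No delicate step is really required, which is why I expect this theorem to have a short proof.
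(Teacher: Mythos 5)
Your proof is correct and takes essentially the same route as the paper: both reduce to the mean-square identity $\sigma^2_\alpha + \mu^2 = \int_D f(\bsx)^2p(\bsx)^2/q_\alpha(\bsx)\rd\bsx$ (noting $D$ is independent of $\alpha$ on $S^0$), establish convexity of the integrand in $\alpha$ pointwise in $\bsx$, and then integrate. The only cosmetic difference is in verifying pointwise convexity: the paper computes the Hessian $2f^2p^2q_rq_s/q_\alpha^3$ and notes it is positive semi-definite, while you compose the convex map $t\mapsto 1/t$ with the affine, positive map $\alpha\mapsto q_\alpha(\bsx)$ and scale by $f^2p^2\ge 0$.
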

\begin{proof}
See Section~\ref{sec:thm:convexalpha}
\end{proof}

To handle control variates, the optimal $\alpha$
depends on the regression vector $\beta$ and
vice versa. Theorem~\ref{thm:jointconvex} shows
that the variance of $\hat\mu_{\alpha,\beta}$ is jointly
convex in these two vectors.

\begin{theorem}\label{thm:jointconvex}
The quantity $\sigma^2_{\alpha,\beta} = n\var(\hat{\mu}_{\alpha, \beta})$ in 
equation~\eqref{eq:hatmuab} is
a jointly convex function of $\alpha\in S^0$ and $\beta\in\real^{J}$.
\end{theorem}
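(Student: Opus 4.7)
The plan is to reduce the statement to the convexity of the quadratic-over-linear perspective function $g(u,v)=u^2/v$ (for $v>0$), applied pointwise under the integral sign. Starting from the mean-square representation in equation~\eqref{eq:msqcvhat} with $q=q_\alpha$, I would write
\begin{align*}
\sigma^2_{\alpha,\beta}+\mu^2 \;=\; \ms(\hat\mu_{\alpha,\beta})
\;=\; \int \frac{\bigl[(f(\bsx)-\beta^\tran(h(\bsx)-\theta))\,p(\bsx)\bigr]^2}{\sum_{j=1}^J\alpha_j q_j(\bsx)}\rd\bsx,
\end{align*}
and note that since $\mu^2$ does not depend on $(\alpha,\beta)$, it suffices to show joint convexity of this integral in $(\alpha,\beta)\in S^0\times\real^J$.

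Next I would examine the integrand at each fixed $\bsx$. The numerator is the square of
$$
A(\alpha,\beta;\bsx) \;=\; \bigl(f(\bsx)-\beta^\tran(h(\bsx)-\theta)\bigr)p(\bsx),
$$
which is affine in $\beta$ and constant in $\alpha$, hence affine in $(\alpha,\beta)$. The denominator $B(\alpha,\beta;\bsx)=\sum_{j=1}^J\alpha_j q_j(\bsx)$ is linear in $\alpha$ and constant in $\beta$, hence affine in $(\alpha,\beta)$; moreover $B>0$ throughout $S^0$ on the support of the integrand by the support condition in Definition~\ref{def:supportmix}. The function $g(u,v)=u^2/v$ is jointly convex on $\real\times(0,\infty)$ \citep{boyd:vand:2004}, and composing a jointly convex function with an affine map preserves convexity, so the integrand $g(A(\alpha,\beta;\bsx),B(\alpha,\beta;\bsx))$ is jointly convex in $(\alpha,\beta)$ for each $\bsx$.

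Finally, I would invoke the standard fact that integration preserves convexity: an integral of pointwise convex functions against a positive measure is itself a convex function of the parameters, provided the integral is well defined. Subtracting the constant $\mu^2$ gives joint convexity of $\sigma^2_{\alpha,\beta}$.

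There is no real obstacle here, just a bookkeeping concern: one must verify that $B(\alpha,\beta;\bsx)>0$ on the region where $A\neq 0$, so that the quadratic-over-linear composition is well defined on $S^0$; the support condition guarantees this, and on $\{p(\bsx)=0\}$ both $A$ and the integrand vanish. Thus the pointwise convex bound integrates cleanly, establishing the theorem.
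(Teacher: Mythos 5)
Your proposal is correct and follows essentially the same route as the paper's own proof: writing $\sigma^2_{\alpha,\beta}+\mu^2$ as an integral whose integrand is quadratic-over-linear, with numerator affine in $\beta$ and denominator linear and positive in $\alpha$ on $S^0$, then invoking joint convexity of $y^2/z$ and the fact that pointwise convexity is preserved under integration. Your explicit check that the support condition keeps the denominator positive where the numerator is nonzero is a sensible bookkeeping addition but does not change the argument.
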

\begin{proof}
See Section~\ref{sec:thm:jointconvex}.
\end{proof}

We can apply convex optimization to choose $\alpha$
for $\hat\mu_\alpha$ or $\hat\mu_{\alpha,\beta} $ but
not directly for $\wt\mu_{\alpha,\beta}$.
Lemma~\ref{lem:samespace} shows that
when a defensive component is present then the set 
of estimators attained by the two estimator forms,
$\tilde\mu_{\alpha,\beta}$ and $\hat\mu_{\alpha,\beta}$ coincide.
Therefore if we choose $\alpha$ to minimize the variance of $\hat\mu_{\alpha,\beta}$ 
and the optimal $\alpha_*\in \simdef$,
then $\alpha_*$ is optimal for the variance of $\tilde\mu_{\alpha,\beta}$ too.
\begin{lemma}\label{lem:samespace}
Suppose that $\alpha\in \simdef$ and choose $\beta\in\real^J$. 
Let $\alpha$ and the densities $q_1,\dots,q_J$ satisfy 
the support condition of Definition~\ref{def:supportmix}
and let $h_j$ be defined by~\eqref{eq:defineh}. 
Then there exists a $\gamma\in\real^J$ such that 
$\hat\mu_{\alpha,\gamma} =\tilde\mu_{\alpha,\beta}.$
\end{lemma}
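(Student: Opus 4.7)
The plan is to exhibit an explicit $\gamma$ depending on $\alpha$ and $\beta$ and check that each summand of $\hat\mu_{\alpha,\gamma}$ matches the corresponding summand of $\tilde\mu_{\alpha,\beta}$. The key structural facts to use are: (i) the support condition implies $h_j(\bsx)p(\bsx)=q_j(\bsx)$ everywhere; (ii) since each $q_j$ is a probability density, $\theta_j=1$ for all $j$; and (iii) because $q_1=p$, the term $h_1(\bsx)-\theta_1$ vanishes on $\{p>0\}$, so $\gamma_1$ does not appear in $\hat\mu_{\alpha,\gamma}$. This last point is crucial because it provides the extra degree of freedom that makes the match possible.

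First I would rewrite the difference $\hat\mu_{\alpha,\gamma}-\tilde\mu_{\alpha,\beta}$ as a single sum, absorbing the constant $\beta^\tran\theta=\sum_j\beta_j$ from $\tilde\mu_{\alpha,\beta}$ under the sum by using $1=q_\alpha(\bsx_i)/q_\alpha(\bsx_i)$, and rewriting $\gamma^\tran\theta\,p=(\sum_j\gamma_j)q_1$ using the defensive identity $p=q_1$. Using $h p=q$ componentwise, this produces
\begin{align*}
\hat\mu_{\alpha,\gamma}-\tilde\mu_{\alpha,\beta}
=\frac1n\sum_{i=1}^n\frac{1}{q_\alpha(\bsx_i)}\Bigl[(\beta-\gamma)^\tran q(\bsx_i)+\bigl({\textstyle\sum_j\gamma_j}\bigr)q_1(\bsx_i)-\bigl({\textstyle\sum_j\beta_j}\bigr)q_\alpha(\bsx_i)\Bigr],
\end{align*}
where $q=(q_1,\dots,q_J)^\tran$.

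Next I would choose $\gamma$ to make the bracketed expression vanish identically in $\bsx$, which certainly forces every summand to be zero. Expanding $q_\alpha=\sum_k\alpha_k q_k$ and matching the coefficient of each $q_k$ gives, for $k>1$,
\begin{align*}
\gamma_k=\beta_k-\bigl({\textstyle\sum_j\beta_j}\bigr)\alpha_k,
\end{align*}
while the $k=1$ coefficient becomes $\beta_1+\sum_{j>1}\gamma_j=(\sum_j\beta_j)\alpha_1$, which is automatically satisfied by the $k>1$ formulas because $\sum_j\alpha_j=1$. The coordinate $\gamma_1$ is unconstrained, reflecting the fact that $\gamma_1$ never enters $\hat\mu_{\alpha,\gamma}$; one may set $\gamma_1=0$.

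The only delicate bookkeeping point concerns $\bsx$ where $p(\bsx)=0$: there the support condition forces $q_j(\bsx)=0$ for every $j$, so the bracket is $0$ automatically and no division by zero occurs (the sample $\bsx_i$ has $q_\alpha(\bsx_i)>0$ by construction). No further obstacle arises; the lemma is essentially a linear-algebra identity enabled by the defensive coordinate.
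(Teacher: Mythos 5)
Your proof is correct and follows essentially the same route as the paper: you exhibit the explicit reparametrization $\gamma_k=\beta_k-\bigl(\sum_j\beta_j\bigr)\alpha_k$, which is exactly the paper's choice $\gamma_j=\beta_j-\alpha_j\beta^\tran\one$, and verify the identity termwise using $h_jp=q_j$ and $\theta=\one$. The only cosmetic difference is that you set $\gamma_1=0$ rather than using the same formula for $j=1$, which is immaterial since the defensive regressor $(h_1-\theta_1)p$ vanishes identically, as you correctly note.
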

\begin{proof}
See Section~\ref{sec:lem:samespace}
\end{proof}

We often have to estimate multiple integrals
from the same sample values $\bsx_{ij}$.
For integrands $f_k$, $k=1,\dots,K$,
let $\hat\mu_{\alpha,\beta_k}^{(k)}$ be the
estimate~\eqref{eq:hatmuab} with $f=f_k$,
$\alpha\in S$ and $\beta_k\in\real^J$.
For weights $w_k>0$,
we might make the tradeoff via either
$$
\max_{1\le k\le K}w_k\var(\hat\mu^{(k)}_{\alpha,\beta_k})
\quad\text{or}\quad
\sum_{k=1}^Kw_k\var(\hat\mu^{(k)}_{\alpha,\beta_k}).
$$
Both of these criteria inherit convexity from
Theorem~\ref{thm:jointconvex}.  Choosing not to use control
variates corresponds to constraining $\beta_k=0$ and the
multi-integral criteria remain convex in $\alpha$.

\section{Choosing $\alpha$}\label{sec:ais}

Given a sample $\bsx_i\simiid q_{\alpha'}$
for $\alpha'\in S^0$,
we may use it to estimate an improved
mixture vector $\alpha$.
From the proof of Theorem~\ref{thm:jointconvex}
we know that the sampling variance using $\alpha\in S^0$ satisfies
\begin{align}\label{eq:varsigsq2}
\sigma^2_{\alpha,\beta} + \mu^2
=
\int_D
\frac{\bigl( f(\bsx)p(\bsx) + \sum_{J=1}^J\beta_j(q_j(\bsx)-p(\bsx))
\bigr)^2}{q_{\alpha}(\bsx)}
\rd\bsx.
\end{align}
An unbiased estimate of the right hand side of~\eqref{eq:varsigsq2} is
\begin{align}\label{eq:criterion}
\frac1n\sum_{i=1}^n
\frac{\bigl( f(\bsx_i)p(\bsx_i) + \sum_{J=1}^J\beta_j(q_j(\bsx_i)-p(\bsx_i))
\bigr)^2}{q_{\alpha}(\bsx_i)q_{\alpha'}(\bsx_i)}.
\end{align}
The quantity in~\eqref{eq:criterion} is jointly
convex in $\alpha\in S^0$ and $\beta\in\real^J$.
Therefore we can use convex optimization to
find the minimizers $(\alpha_*,\beta_*)$
of~\eqref{eq:criterion} and then generate a new sample
using $\alpha_*$.

Equation~\eqref{eq:criterion} may be used to devise
adaptive importance sampling methods in which early
function evaluations are used to change the value of $\alpha$
used to generate later ones.  A full exploration of adaptive
strategies is beyond the scope of this paper. Our numerical
examples in Section~\ref{sec:examples} use a two stage method in which a pilot sample
is used to estimate $\alpha$ for a final sample that is used
for the estimate.

\subsection{Bounding $\alpha_j$ away from zero}

It is risky to sample with a value of $\alpha_j$
that is so small that we may not get a reasonable
number $n_j$ of values from $q_j$.
We may prefer to impose constraints $\alpha_j>\epsilon$
on each sampling fraction.
For $\epsilon>0$ define
\[
S^{\epsilon} = \Bigl\{(\alpha_1, \cdots, \alpha_J) \mid \alpha_j \geq
\epsilon, \sum\limits_{j = 1}^J \alpha_j = 1\Bigr\}.
\]
Theorem~\ref{thm:safetyepsilon} shows that we pay
a remarkably small variance penalty
when requiring $\alpha\in S^\epsilon$.

\begin{theorem}\label{thm:safetyepsilon}
Let $\alpha$ and $\beta$ minimize
$\var(\tilde\mu_{\alpha,\beta})$ over
$\alpha\in S$ and $\beta\in\real^J$.
For $0<\epsilon<1/J$
let $\delta$ and $\gamma$
minimize $\var(\tilde\mu_{\delta,\gamma})$
over $\delta\in S^\epsilon$ and $\gamma\in\real^J$.
Then
\[
\var(\tilde\mu_{\delta,\gamma})
\leq (1 +\eta(\alpha))
\var(\tilde\mu_{\alpha, \beta}),\]
where
$$\eta(\alpha) = \frac{1-\epsilon(J-K)}{1-\epsilon J},
\quad\text{for}\quad K=K(\alpha) = \#\{\alpha_j<\epsilon\}.$$
The same holds for the $\hat\mu_{\alpha,\beta}$
and $\hat\mu_{\delta,\gamma}$ analogously defined.
\end{theorem}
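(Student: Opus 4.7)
The plan is to construct a specific $\delta_0 \in S^\epsilon$ from the optimizer $\alpha$ and then use the joint optimality of $(\delta, \gamma)$ over $S^\epsilon \times \real^J$ to reduce the problem to bounding $\var(\tilde\mu_{\delta_0, \beta})$ against $\var(\tilde\mu_{\alpha, \beta})$. Let $A = \{j : \alpha_j < \epsilon\}$ with $|A| = K$ and $B = \{1, \ldots, J\} \setminus A$. Since $\epsilon < 1/J$ and $\sum_j \alpha_j = 1$ force some $\alpha_j \ge \epsilon$, $B$ is nonempty. I would define $\delta_{0, j} = \epsilon$ for $j \in A$ and $\delta_{0, j} = \epsilon + \rho(\alpha_j - \epsilon)$ for $j \in B$, with $\rho = (1 - J\epsilon)/\sum_{j \in B}(\alpha_j - \epsilon)$ chosen to enforce $\sum_j \delta_{0, j} = 1$. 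Direct verification gives $\delta_0 \in S^\epsilon$ and $\rho \in (0, 1]$. Since $\sum_{j \in A}\alpha_j \in [0, K\epsilon)$, the denominator of $\rho$ lies in $(1 - J\epsilon, 1 - (J-K)\epsilon]$, yielding $1/\rho \le (1 - (J-K)\epsilon)/(1 - J\epsilon) = \eta(\alpha)$.

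The crucial property is that $\delta_{0, j} \ge \rho \alpha_j$ for all $j$ (trivially on $A$ since $\rho \alpha_j < \rho\epsilon \le \epsilon$, and on $B$ since $\delta_{0, j} = \rho\alpha_j + (1 - \rho)\epsilon$), which gives the pointwise density bound $q_{\delta_0}(\bsx) \ge \rho q_\alpha(\bsx)$. Plugging into formulation~\eqref{eq:msqcvhat} for $\hat\mu$, where $q$ appears only in the denominator, I obtain $\ms(\hat\mu_{\delta_0, \gamma}) \le (1/\rho)\ms(\hat\mu_{\alpha, \gamma}) \le \eta(\alpha) \ms(\hat\mu_{\alpha, \gamma})$ for every $\gamma \in \real^J$; minimizing over $\gamma$ preserves the inequality. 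The corresponding statement for $\tilde\mu$ follows via Lemma~\ref{lem:samespace} in the defensive-mixture setting, where the $\tilde\mu$ and $\hat\mu$ families coincide and so share the same minimum variance. Combining with the optimality of $(\delta, \gamma)$ over $S^\epsilon \times \real^J$ closes the loop.

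The main obstacle is passing from the mean-square bound to the stated variance bound. A direct conversion yields $\sigma^2_{\delta_0, \beta} \le \eta \sigma^2_{\alpha, \beta} + (\eta - 1)\mu^2$, and to upgrade this to $(1 + \eta(\alpha))\sigma^2_{\alpha, \beta}$ one must absorb the residual $(\eta - 1)\mu^2$. The natural route is to exploit the freedom in choosing $\gamma$: in a defensive mixture with $h_1 \equiv 1$, one can shift $\gamma$ by a multiple of $e_1$ to effectively add a constant to the estimator, so the $\mu^2$-type term can be compared against $\var(\tilde\mu_{\alpha, \beta})$ after reoptimization. Verifying that this shift produces precisely the factor $1 + \eta(\alpha)$ rather than $\eta(\alpha)$ alone is the most delicate algebraic step and where I expect the bulk of the work to lie; the parallel argument for $\hat\mu$ uses the same construction without requiring Lemma~\ref{lem:samespace}.
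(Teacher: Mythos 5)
Your construction of $\delta_0$ is sound and is essentially the paper's: writing $\omega_j=\epsilon$ on $B$ and $\omega_j=(\epsilon-\rho\alpha_j)/(1-\rho)$ on $A$, your $\delta_0$ is exactly a convex combination $\rho\alpha+(1-\rho)\omega$ with $\omega\in S^\epsilon$, and your bound $1/\rho\le(1-(J-K)\epsilon)/(1-J\epsilon)$ matches the paper's bound on $1/\lambda$. The genuine gap is precisely the step you flag and do not close: passing from the mean-square comparison to a variance comparison. The pointwise bound $q_{\delta_0}\ge\rho q_\alpha$ applied to~\eqref{eq:msqcvhat} leaves the residual $(1/\rho-1)\mu^2/n$, and this cannot in general be absorbed into $\var(\tilde\mu_{\alpha,\beta})$: if some $q_k$ is proportional to $fp$ (so the optimal constrained variance is $0$ by Theorem~\ref{thm:smallregret}) while $\mu\ne0$, your route yields a strictly positive bound whereas the theorem requires $0$. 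Your proposed repair fails for a structural reason: every estimator in both families is unbiased for every coefficient vector, so no shift of $\gamma$ can ``add a constant'' to the estimator; indeed the defensive regressor in $\hat\mu_{\alpha,\gamma}$ is $(h_1-1)p/q_\alpha\equiv0$, so shifting $\gamma$ along $e_1$ changes nothing at all, and in $\tilde\mu$ a shift along $\alpha$ likewise cancels exactly.

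The paper avoids the $\mu^2$ term differently. Lemma~\ref{lem:mixvarbound} applies the regret Theorems~\ref{thm:smallregret}/\ref{thm:smallregrethat} to the two-component mixture $q_\delta=\lambda q_\alpha+(1-\lambda)q_\omega$: there the special coefficient choice recenters the numerator by $\mu q_\alpha$ (a shift that lies in the span of the available regressors because the coefficients sum appropriately, unlike a constant shift), so the second moment of the recentered numerator equals the variance on both sides, giving $\min_\gamma\var(\tilde\mu_{\delta,\gamma})\le\sigma^2_\alpha/(n\lambda)$ with no $\mu^2$ penalty. The device of Corollary~\ref{cor:smallregretbeta} (replace $f$ by $f-\beta^\tran h$, resp.\ $f-\beta^\tran(h-\theta)$, at the $\alpha$-optimal $\beta$) then upgrades the right-hand side from $\sigma^2_\alpha$ to the control-variated variance at $\alpha$, and the transfer between $\tilde\mu$ and $\hat\mu$ is handled as you suggest. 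If you replace your mean-square comparison by an appeal to this lemma-type argument applied to your own decomposition $\delta_0=\rho\alpha+(1-\rho)\omega$, your proof goes through; as written, it does not.
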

\begin{proof}
See Section~\ref{sec:thm:safetyepsilon}.
\end{proof}

The penalty factor $1+\eta(\epsilon)$ in Theorem~\ref{thm:safetyepsilon} is asymptotically
$1+K\epsilon+O(\epsilon^2)$ as $\epsilon\to 0$, and
$1+K\epsilon \le 1+(J-1)\epsilon$.
The price to be paid is bounded by a term nearly linear in the number
of components with $\alpha_j<\epsilon$.
If for example we insist on at least $k$ (e.g. $k=10$)
observations from each $q_j$, then $\epsilon = k/n$
and $1+\eta(\epsilon) \le (1-\epsilon)/(1-J\epsilon)
= (1-k/n)/(1-Jk/n)$.

For any given $\alpha$, a suboptimal $\hat\beta$
generally increases variance by a factor
of $1+O(\Vert\hat\beta-\beta_*\Vert^2)$ because
the variance function is twice continuously differentiable
on $\real^J$.
A suboptimal $\alpha$ can multiply the variance by
$1+O(\Vert\hat\alpha-\alpha_*\Vert)$.
The reason for the different rate is that the optimum
$\alpha_*$ can be on the boundary of the simplex $S$
and there is no reason to suppose that the gradient
of the variance with respect to $\alpha$ vanishes there.

\subsection{Optimization}

We found that off-the-shelf convex optimization codes we tried were
slow. As a result it was worthwhile to write our own code for
the specific case we need in this problem. We use standard
convex optimization methods, but make modifications to
improve robustness.

Our workhorse numerical optimization problem is
\begin{equation}\begin{split}
\text{minimize}\quad &f_0(\alpha,\beta)= \sum_{i=1}^n \frac{ (Y_i - X_i^\tran\beta)^2}{Z_i^\tran\alpha}\\
\text{subject to}\quad & \beta\in\real^K,\\
& \alpha_j > \epsilon_j\ge0, \ j=1,\dots,J, \ \text{and}\\
& \sum_{j=1}^J\alpha_j < 1+\eta,\ \text{for $\eta\ge0$},
\end{split}
\end{equation}
which is equivalent to~\eqref{eq:criterion}.  In this notation
$Y\in\real^n$ represents $fp$, $X\in\real^{n\times K}$
represents $q_j-p$ for $j>1$
(and possibly some additional control variates) and
$Z\in[0,\infty)^{n\times J}$ represents $q_j(\bsx)$.
We have multiplied
the objective function by $n$.
Valid problem data have $X$ of full rank $K\le n$
and no row of $Z$ equals $(0,\dots,0)$.

The constraints are defined by parameters $\epsilon_j\ge0$
and $\eta\ge0$.
We assume that there is at least one feasible vector $\alpha$.
We do not need to impose the constraint $\sum_{j=1}^J\alpha_j=1$.
The way the problem is formulated makes that constraint equivalent
to $\sum_{j=1}^J\alpha_j \le1$. As a result we do not have to mix
equality and inequality constraints.  We use a very small $\eta$,
usually $\eta=0$, and then renormalize the solution $\alpha$ to sum to one.
We can choose a different $\epsilon_j$ for each $\alpha_j$. For
example we might want a higher lower bound on the defensive
component than on the others.

We use a barrier method,
Algorithm 11.1 of \cite{boyd:vand:2004}, with function
\begin{align}\label{eq:frho}
f(\alpha,\beta;\rho)
= f(\alpha,\beta) - \rho\sum_{j=1}^J\log(\alpha_j-\epsilon_j)
-\rho\log\biggl(1+\eta -\sum_{j=1}^J\alpha_j\biggr),
\end{align}
for $\rho>0$.   For each $\rho$,  this function is jointly convex
over the convex set of allowable $(\alpha,\beta)$.

We minimize~\eqref{eq:frho} by a damped Newton iteration,
Algorithm 9.5 of \cite{boyd:vand:2004}
using a backtracking line search  (their Algorithm 9.2).
The solutions are denoted $\alpha(\rho)$ and $\beta(\rho)$.
Beginning with $\rho=\rho_1$ we decrease $\rho$
by a factor of $\phi=2$, taking $\rho_\ell = \rho_{\ell-1}/\phi$
for $\ell \ge2$,  minimizing $f(\alpha,\beta;\rho_\ell)$
at step $\ell\ge1$ obtaining $\alpha^{(\ell)}=\alpha(\rho_{\ell})$
and $\beta^{(\ell)}=\beta(\rho_{\ell})$,
using $\alpha^{(\ell-1)}$ and $\beta^{(\ell-1)}$  \art{fixed typo}
as starting values.  To start the whole algorithm we use
$\beta^{(0)}=0$ and $\alpha^{(0)}$ given by
$$
\alpha^{(0)}_{j} = \epsilon_j+\delta,\quad\text{where}\quad
\delta = \frac{1+\eta-\sum_{j=1}^J\epsilon_j}{J+1}.$$
\art{
If we equate the quantities inside the $\log(\cdot)$ penalties,
we find 
$$\delta = 1+\eta -\sum \alpha_i=1+\eta -\sum \epsilon_j -J\delta$$
which is how I got $\delta$ above.  Are you equating different things?
}
The constraints are feasible if and only if $\delta>0$.
This choice of starting value equalizes all of the $J+1$ logarithmic
penalty terms.
We continue decreasing $\rho$ until we reach $\rho_\ell$
where $(J+1)\rho_\ell <f_0(\alpha^{(\ell)},\beta^{(\ell)})\varepsilon$,
for a tolerance $\varepsilon>0$.
At this point we have a certificate that
$$
f_0(\alpha^{(\ell)},\beta^{(\ell)})
\le
(1+\varepsilon)
\min_{\alpha,\beta}f_0(\alpha,\beta).$$

We have found it helpful to consider preconditioning the
Hessian matrix for our Newton steps.
The intuition is that if $\alpha$ and $\beta$ are on very different
scales, scaling them to be comparable will reduce the
conditon number of $H$.
We choose a diagonal preconditioning matrix
$P$, with $P_{ii} = 1$ for
$1\le i\le a$ and $P_{ii}=p$ for $a<i\le b$,
where $a$ and $b$ are the dimensions
of $\alpha$ and $\beta$ respectively, and the
indices of $H$ corresponding to $\alpha$ precede
those corresponding to $\beta$. We take
$$p = \sqrt{\frac{\mathrm{median}(\mathrm{abs}(H_{1{:}a,
1{:}a}))}
{\mathrm{median}(\mathrm{abs}(H_{(a+1){:}(a+b), (a+1){:}(a+b)}))}}.$$
With this choice, the median absolute value of the upper left $a\times a$
block of $PHP$ is the same as that in its lower right $b\times b$ block.
If $PHP$ has a smaller condition number than $H$ does, then we use
the preconditioning, otherwise not.
\art{added note about preconditioning}


\section{Examples}\label{sec:examples}

Here we illustrate the use of convex optimization
to find an improved mixture importance sampler.
We use a two step process.  A preliminary sample
has $\alpha_j=1/J$ for all $j=1,\dots,J$.  On this
sample we estimate optimal $\alpha_j$ subject to
constraints $\alpha_j\ge \epsilon>0$.  Then we
use the second sample to estimate the desired expectation.

There are two main classes of problems where importance
sampling is crucial: spiky or even singular functions,
and rare events. In these examples we model the imperfect
information one might have in practice when choosing $q_j$. 
Therefore we include distributions $q_j$ that are good but not quite
optimal as well as some $q_j$ that correspond to poor choices.
The method does better if it can give a small $\alpha_j$
to the unhelpful $q_j$.

\subsection{Singular function}

For $\bsx \in \mathbb{R}^d$ we take the
nominal distribution to be $p(\bsx) = \dnorm(\mu,
  \Sigma)$ and $f(\bsx) = \|\bsx - \bsx_0\|^{-\gamma}$,
for some $\bsx_0$. We choose $\gamma < {d}/{2}$ so that
our example has $\var_p(f(\bsx)) < \infty$.
For this example, we use $d = 5$, $\gamma = 2.4$, $\Sigma_{i,j}
= \rho^{|i-j|}$ where $\rho = 1/2$ as well as
\[
\mu = \colvec{5}{0}{0}{0}{0}{0},\quad\text{and}\quad \bsx_{0} = \colvec{5}{1}{1}{1}{1}{1}.
\]

We choose a sequence of proposal densities
$q_j=\mathcal{N}(\bsx_k,  2^{-r}I_5)$, for $k = 1,
\dots, 5$, $r = 1, \dots, 10$, with index $j = 10(k-1) + r$,
where
\[
\bsx_1 = 
\bsx_0 =\colvec{5}{1}{1}{1}{1}{1}, 
\bsx_2 = \colvec{5}{-1}{-1}{-1}{-1}{-1}, 
\bsx_3 = \colvec{5}{-1}{1}{1}{1}{1}, 
\bsx_4 = \colvec{5}{1}{-1}{-1}{-1}{-1}, 
\bsx_5= \colvec{5}{-1}{-1}{1}{1}{1}.
\]
We include a defensive component via $q_{51} = p$. Note that we have
moved the defensive component from $j=1$ to $j=J$.
The only singularity is at $\bsx_0$.
We use a safety
lower bound $\epsilon = {0.1}/{51}$ on all $\alpha_j$.
That is, no proposal density will get less than $10$\% of
an equal share. This still allows the algorithm to allocate
the remaining $90$\% of the sample effort in a way to minimize variance.

For each simulation, we use $2$ batches of sizes $n_1 = 10^4$,
and $n_2 = 5\times10^5$ respectively. 
After sampling the first $n_1$ values, we optimize~\eqref{eq:criterion} over $\alpha$
with $\beta$ fixed at zero to get $\alpha_*$. We also optimize~\eqref{eq:criterion} over 
both $\alpha$ and $\beta$ getting $(\alpha_{**},\beta_{**})$.
These choices allow us to compare methods that use control variates with
those that do not.

To study mixture optimization with control variates we take
the second sample from $q_{\alpha_{**}}$.
Our estimate of $\mu$ is based on the second sample only:
\art{I took all the bolding off of $\beta$ as it is already
understood to be a vector.}
\begin{align*}
\hat{\mu}_{\alpha_{**},\beta} &= \frac{1}{n_2} 
  \sum\limits_{i = 1}^{n_2}\frac{f(x_{i})p(x_{i}) - {\beta}^{\tran}\bigl(\boldsymbol{q}(x_{i}) - p(x_{i})
\boldsymbol{1}\bigr)}{q_{\alpha_{**}}(x_{i})}.
\end{align*}
The value of $\beta$ used there is found by least squares fitted
to the second sample value, instead of using $\beta_{**}$.
The vector $\boldsymbol{q}(x_i)$ has elements $q_j(x_i)$
(ordinarily just $J-1$ of them)
and $\boldsymbol{1}$ has all $1$s.

To judge the effect of dropping control variates we also considered
taking the second sample from $q_*$ and using the estimate
\begin{align*}
  \hat{\mu}_{\alpha_{*}}&= \frac{1}{n_2} 
  \sum\limits_{i = 1}^{n_2}\frac{f(x_{i})p(x_{i})}{q_{\alpha_{*}}(x_{i})}.
\end{align*}
The judge the effect of non-uniform sampling,
we take $n_{2}$ samples from the uniform mixture $q_{U}$ which has
$\alpha_{j} = 1/J$ for $j = 1, \dots, J$.
Then we compute
\begin{align*}
\hat{\mu}_{q_U} &= \frac{1}{n_2} 
  \sum\limits_{i = 1}^{n_2}\frac{f(x_{i})p(x_{i})}{q_{U}(x_{i})},\quad\text{and}\\
\hat{\mu}_{q_U,\beta} &= \frac{1}{n_2} 
  \sum\limits_{i = 1}^{n_2}\frac{f(x_{i})p(x_{i}) -
    {\beta}^{\tran}(\boldsymbol{q}(x_{i}) -
    p(x_{i})\boldsymbol{1})}{q_{U}(x_{i})}
\end{align*}
Here ${\beta}$ is estimated with these $n_2$
samples from the uniform mixture.

We ran $5000$ replicates of these simulations.
Because $\mu$ is not known we estimate the mean
squared error for each method by its sample variance.
The Monte Carlo variance was found by running $N=5\times 10^7$ IID
samples.  
Table~\ref{tab:output_ex1_1} reports the results.
We see that for the singular function, optimizing the sampling weights
$\alpha$ reduces variance by about twenty-fold compared to using
the uniform distribution on weights.  Incorporating a control variate
brings a modest improvement.

\begin{table}[t]
\centering 
\begin{tabular}{cccrrrr}
  \toprule
Sampling & CV & $\hat\mu$ & VRFmc & VRF.uis \\ 
  \midrule
$U$ & No  & 0.173862 & 1.494 & 1.000 \\ 
$U$ & Yes  & 0.173871 & 3.273 & 2.190 \\ 
$\alpha_{*}$ & No  & 0.173867 & 28.321 & 18.953 \\ 
$\alpha_{**}$ & Yes  & 0.173867 & 33.013 & 22.093 \\ 
\bottomrule
\end{tabular}
\caption{Singular function example.
The estimate $\hat\mu$ was computed from $500{,}000$
observations using the sampler given in the first column. 
Control variates were used in two of those samples. The final 
columns give variance reduction factors compared to plain 
Monte Carlo and compared to uniform mixture importance 
sampling with no control variates. 
}
\label{tab:output_ex1_1}
\end{table}

We also present mixture components with the top 10 values of estimated
$\alpha$ in Table \ref{tab:alpha_ex1_opt.both_1} and \ref{tab:alpha_ex1_opt.alpha_1}.
About half of the sample is allocated to the defensive component  in both settings.
The true singularity point is at $k=1$. That point gets chosen more often
than the others, especially when no control variates are used. 
\begin{table}[t]
\centering
\begin{tabular}{cllcc}
\toprule
 k & $\sigma^2_r$ & Mean $\alpha_j$ & SD $\alpha_j$ \\ 
\midrule
D & D & 0.5263 & 0.028722 \\ 
1 & 0.25 & 0.2721 & 0.029184 \\ 
1 & 0.5 & 0.0912 & 0.037610 \\ 
2 & 0.5 & 0.0065 & 0.005128 \\ 
4 & 0.5 & 0.0039 & 0.002168 \\ 
1 & 0.0625 & 0.0031 & 0.001171 \\ 
1 & 0.125 & 0.0031 & 0.001210 \\ 
1 & 0.03125 & 0.0027 & 0.000511 \\ 
3 & 0.25 & 0.0026 & 0.000563 \\ 
3 & 0.5 & 0.0025 & 0.000591 \\ 
\bottomrule
\end{tabular}
\caption{Top 10 mixture components $\dnorm(\bsx_k,\sigma^2_rI_5)$
for the singular integrand in $\alpha_{**}$, which uses control variates. 
D denotes the defensive mixture. 
The last columns are mean and sd of $\alpha_j$ over $5000$ simulations.} 
\label{tab:alpha_ex1_opt.both_1}
\end{table}
\begin{table}[t]
\centering
\begin{tabular}{cllrr}
\toprule
k & $\sigma^2_r$ & Mean $\alpha_j$ & SD $\alpha_j$ \\ 
\midrule
D & D & 0.4838 & 0.018307 \\ 
1 & 0.25 & 0.2839 & 0.021976 \\ 
1 & 0.5 & 0.0971 & 0.019779 \\ 
1 & 0.0625 & 0.0290 & 0.006382 \\ 
1 & 0.03125 & 0.0099 & 0.002762 \\ 
1 & 0.125 & 0.0050 & 0.005936 \\ 
1 & 0.015625 & 0.0029 & 0.000733 \\ 
3 & 0.25 & 0.0022 & 0.000421 \\ 
3 & 0.5 & 0.0021 & 0.000358 \\ 
1 & 0.0078125 & 0.0021 & 0.000353 \\ 
\bottomrule
\end{tabular}
\caption{Top 10 mixture components $\dnorm(\bsx_k,\sigma^2_rI_5)$
for the singular integrand in $\alpha_{*}$, which does not use control variates. 
D denotes the defensive mixture. 
The last columns are mean and sd of $\alpha_j$ over $5000$ simulations.} 
\label{tab:alpha_ex1_opt.alpha_1}
\end{table}

\subsection{Rare event}

For our rare event simulation we took
$\bsx \in \mathbb{R}^3$,  $p(\bsx) = \mathcal{N}(\mu,  I_3)$ 
and defined event sets
$$\mathcal{D}_i = \{\bsx\mid\Vert\bsx\Vert >
\eta_i\Vert\bsz_{i}\Vert \text{ elementwise and sign$(\bsx)$ $=$ sign$(\bsz_{i})$} \}$$
for $i = 1, \dots, 8$. 
The points $\bsz_i$ are all eight points 
of the form $(\pm \eta_i,\pm\eta_i,\pm\eta_i)$ for $\eta_i>0$, with $\eta_i$
chosen so that $\Pr(\bsx\in D_i) = 16^{-i}\times10^{-3}$.
\art{I now think we might not need to identify them
since index i does it.  But a referee might ask later.}

The sets $\cd_i$ are disjoint.  
The integrand $f$ is $1$ if $\bsx\in \cd = \cup_{i=1}^8\cd_i$ 
and is $0$ otherwise. 
For this problem we know that 
$$\mu =
\e(f(\bsx)) = \sum_{i = 1}^{8} \Pr(\bsx\in\mathcal{D}_i)
\doteq 6.666666665\times10^{-5}$$
and in a Monte Carlo sample of $n$ observations the variance
of the sample mean is $\mu(1-\mu)/n\doteq\mu/n$.

We choose $q_j=\dnorm(\bsz_k,
  \sigma_r^2{I}_{3})$, for $k = 0, \dots, 8$ as the mixture
components, including a new point
$\bsz_0 = \begin{pmatrix}0 & 0 & 0\end{pmatrix}^\tran$
and the previous corners $\bsz_k, k = 1, \cdots, 8$.
That is, we suppose that the user has some idea where the
rare events are.  But we suppose that there is little knowledge
of the right sampling variance, so we take
\art{fixed typo}
$$\sigma^2_r \in\Bigl\{ \frac1{50}, \frac1{40}, \frac1{30}, 
\frac1{20}, \frac1{10}, \frac12, 2, 10, 20, 30, 40, 50\Bigr\}$$ 
for $r = 1, \dots, 12$ respectively. In addition to these $108$ mixture
components, we include the defensive
component as $q_{109} = p$, once again as $q_J$, not $q_1$.

We take a pilot sample of $n_1=10^4$ observations to estimate mixture
components with, and then we  follow up with $n_2 = 100{,}000$
observations for the final estimate.
We optimized with safety bound $\epsilon = {0.1}/{109}$ for all
$\alpha$. This simulation was
repeated $5000$ times.

\begin{table}[ht]
\centering
\begin{tabular}{cccrrrrrrr}
\toprule
Sampling & CV & $10^5\hat\mu$ & VRFmc & VRFuis & $\overline{\text{MSE}/\wh{\text{VAR}}}$\\
\midrule
U& No  & 6.669473 & 65.249 & 1.000 &  0.980 \\ 
U & Yes  & 6.668018 & 69.516 & 1.065 &  0.976 \\ 
$\alpha_*$ & No & 6.667731 & 1663.983 & 25.502 & 1.018 \\ 
$\alpha_{**}$ & Yes  & 6.667475 & 1676.611 & 25.696  & 1.015 \\ 
\bottomrule
\end{tabular}
\caption{Rare event example.
The estimate $\hat\mu$ was computed from $100{,}000$
observations using the sampler given in the first column. 
Control variates were used in two of those samples. The next
columns give variance reduction factors compared to plain 
Monte Carlo and compared to uniform mixture importance 
sampling with no control variates.  The final column compares
actual squared error with its sample estimate.
} 
\label{tab:output_ex2_1}
\end{table}

The results are in Table~\ref{tab:output_ex2_1}.
For this problem we can use the actual Monte Carlo variance
when computing the variance reduction factor with respect to
plain Monte Carlo.  The method of control variates also provides
at estimate $\wh\var(\hat\mu_{\alpha,\beta})$ which comes from
the regression model variance estimate of the intercept term.
The last column in Table~\ref{tab:output_ex2_1} shows the
average over $5000$ simulations of the true mean squared error
divided by its estimate.
Here we see about a $25$-fold gain from optimizing the mixture
weights.  There is almost no gain from using the control variate.

The top $10$ mixture components in the rare event 
problem are listed in 
Table~\ref{tab:alpha_ex2_opt.alpha_1}. That table has values for the mixture
sampling without control variates. Including control variates gave the same ten
components in the same order with nearly identical mixing probabilities.
The method samples very little from the mixture components with large
variances. The largest weights go on the least rare component corners
$i=1,2,3$.


\begin{table}[t]
\centering
\begin{tabular}{rllcc}
\toprule
k & $\sigma^2$ & Mean $\alpha_j$ & SD $\alpha_j$ \\ 
\midrule
1 & 0.5 & 0.5188 & 0.064185 \\ 
1 & 0.1 & 0.3196 & 0.071853 \\ 
2 & 0.5 & 0.0266 & 0.006733 \\ 
2 & 0.1 & 0.0249 & 0.005421 \\ 
1 & 0.05 & 0.0058 & 0.019458 \\ 
1 & 0.02 & 0.0028 & 0.003635 \\ 
1 & 0.033 & 0.0020 & 0.004237 \\ 
1 & 0.025 & 0.0018 & 0.002479 \\ 
3 & 0.5 & 0.0014 & 0.000338 \\ 
2 & 0.05 & 0.0013 & 0.000706 \\ 
\bottomrule
\end{tabular}
\caption{Top 10 mixture components $\dnorm(\bsz_k,\sigma^2_rI_2)$
for the singular integrand in $\alpha_{*}$, which does not use control variates. 
D denotes the defensive mixture. 
The last columns are mean and sd of $\alpha_j$ over $5000$ simulations.} 
\label{tab:alpha_ex2_opt.alpha_1}
\end{table}

\subsection{Summary}
In both of our examples there was a gain from optimizing over $\alpha$
but very little gain from employing the control variate.
It can happen that a control variate makes an enormous
improvement. One such case is example 2 in \cite{owen:zhou:2000}.


Table \ref{tab:running_time}
gives the average running time for the four methods we compare on
the two examples. These times include running the pilot sample,
any optimizations needed, and then running the final sample.
\art{fixed typo}
The more complicated methods take longer but not
enough to outweight the gain from optimizing $\alpha$.
In three of the examples, the gain from control variates  becomes
a small loss when computation time is accounted for. The exception
is that for the singular function with uniform mixture sampling
the control variate roughly halves the variance while taking about $37$\%
longer, resulting in a minor efficiency gain.

\begin{table}[t]
\centering
\begin{tabular}{rllrrrr}
\toprule
  & $n_1$ & $n_2$ & $U$ no CV & $U$ CV &$\alpha_*$ no CV &  $\alpha_{**}$ CV \\
\midrule
Singularity & $10^{4}$ & $5\times 10^5$  &13.49 & 18.44 & 16.25 &23.82\\
Rare event& $10^{4}$ & $10^{5}$ & 4.15 & 6.19 & 9.61 &23.28\\
\bottomrule
\end{tabular}
\caption{
Average running times in seconds for four estimators on two examples.
}
\label{tab:running_time}
\end{table}

\section{Discussion}\label{sec:conclusions}

We have developed a second version of mixture
importance sampling with control variates
$\hat\mu_{\alpha,\beta}$ at~\eqref{eq:hatmuab}.
For defensive mixtures, this estimator
has the same minimal variance as
$\tilde\mu_{\alpha,\beta}$ at~\eqref{eq:tildemuab}
presented in~\cite{owen:zhou:2000}.
In their stratified versions these estimators enjoy both the regret bounds of the
balance heuristic as well as the regret bounds of
mixture importance sampling.
The new estimator has a variance that is jointly convex
in $\alpha$ and $\beta$.  As a result, we can optimize
a sample mean square over both
of these vectors and then use the resulting $\alpha$
for a followup sample.

There is a large literature on adaptive importance sampling,
going back at least to \cite{mars:1956}.
The cross-entropy method \citep{rubi:kroe:2004}
is devoted to finding a single importance distribution
$q(\bsx) = q(\bsx;\theta)$ for a parameter $\theta\in\Theta$.
It can be challenging to optimize the variance over $\theta$
and they optimize instead a (sample) entropy criterion comparing
$q$ to the optimal density proportional to $fp$ (for $f\ge0$).
\cite{capp:douc:guil:mari:robe:2008} seek to optimize a mixture
$\sum_j\alpha_jq(\bsx,\theta_j)$ over both $\alpha_j$ and $\theta_j$.
They use an entropy criterion instead of variance and apply an
EM style algorithm to estimate the optima.  They note that
the entropy is convex in the mixing parameters, but that it is not
jointly convex in those parameters and the $\theta_j$.
Our methods may be used in conjunction with theirs. Given a
fixed set of $\theta_j$ selected by some other method, we can
jointly estimate the variance-optimal mixing parameters $\alpha_j$
and control variate coefficients $\beta_j$ via convex optimization
applied to sample values.

A multiple of $\mu^2/n$ appears in the regret bound
for multiple importance sampling versus the unknown
best mixture component.  Using a control variate
removes that term.
The quantity $\mu^2/n$ is often negligible
compared to the attained sample variance.
For instance in plain Monte Carlo sampling of
a rare event, the variance $\mu(1-\mu)/n$
is far larger than $\mu^2/n$ and the latter
may be ignored.  But in some rare event simulations
it is possible to obtain estimator with bounded
relative error \citep{heid:1995}. That means that for a sequence of problems with
$\mu\downarrow0$ the variance  is $O(\mu^2/n)$. In
problems with such very efficient estimators, removing
the multiple of $\mu^2/n$ from the variance bound,
as density-based control variates do, is a meaningful improvement.

\section*{Acknowlegment}
This work was supported by grants DMS-0906056
and DMS-1407397
from the National Science Foundation.
\bibliographystyle{apalike}
\bibliography{mcbook}

\section{Proofs}

Theorem~\ref{thm:smallregret}
was proved earlier by~\cite{owen:zhou:2000}.
Their notation defined control variates via
integrals, not expectations.
We give a direct proof in the notation of the
present paper to make this article
self-contained.

\subsection{Proof of Theorem~\ref{thm:smallregret}}
\label{sec:thm:smallregret}

Choose $k\in\{1,2,\dots,J\}$ and then
select $\beta$ with $\beta_k = 0$
and $\beta_j = -\mu {\alpha_j}/{\alpha_k}$ for $j\ne k$, giving
$\sum\limits_{\ell = 1}^{J}\beta_\ell = \mu(1 - {1}/{\alpha_k})$.
From~\eqref{eq:variscvtilde} we find that
\begin{align*}
n\var(\wt\mu_{\alpha,\beta})
& =
\int \frac{[fp-\mu q_\alpha-\beta^\tran hp
+\beta^\tran\one q_\alpha]^2}{q_\alpha}\rd\bsx
\end{align*}
The condition on $q_j$ yields $h_jp=q_j$,
and then for this $\beta$ we find that
$n\var(\wt\mu_{\alpha,\beta})$ is
\begin{align*}
&
\int \frac{\Bigl[fp-\Bigl(\mu\alpha_k-\sum\limits_{\ell = 1}^{J}\beta_\ell\alpha_k +\beta_k\Bigr)q_k -\sum_{j\ne k}
\Bigl(\mu\alpha_j-\sum\limits_{\ell = 1}^{J}\beta_\ell\alpha_j+\beta_j\Bigr)q_j\Bigr]^2}{\sum_{j=1}^J\alpha_jq_j}\rd\bsx\\
&=
\int \frac{(fp-\mu q_k)^2}{\sum_{j=1}^J\alpha_jq_j}\rd\bsx
\le \int \frac{(fp-\mu q_k)^2}{\alpha_kq_k}\rd\bsx = \frac{\sigma^2_{q_k}}{\alpha_k}.
\end{align*}
The result follows because $\var(\wt\mu_{\alpha,\beta_*})
\le \var(\wt\mu_{\alpha,\beta})$.
$\Box$

\subsection{Proof of Theorem~\ref{thm:smallregrethat}}
\label{sec:thm:smallregrethat}

The proof is similar to that of Theorem~\ref{thm:smallregret}. 
We choose $k\in\{1,2,\dots,J\}$ and define $\beta_j = \mu(1_{j=k}-\alpha_k)$.
Making this substitution leads to
$$
\hat\mu_{\alpha,\beta} = \mu + \frac1n\sum_{i=1}^n\frac{f(\bsx_i)p(\bsx_i)
-\mu q_k(\bsx_i)}{q_\alpha(\bsx_i)},
$$
which has variance
$$
\frac1n\int (fp-\mu q_k)^2/q_\alpha\rd\bsx
\le\frac1{n\alpha_k}\int (fp-\mu q_k)^2/q_k\rd\bsx=\sigma^2_{q_k}/n\alpha_k.
\quad\Box
$$

\subsection{Proof of Corollaries~\ref{cor:smallregretbeta}
and~\ref{cor:smallregrethatbeta}
}\label{sec:cor:smallregretbeta}

The same argument underlies both proofs.
We give the version for Corollary~\ref{cor:smallregretbeta}.

Choose $j\in\{1,\dots,J\}$ and $\beta_j\in\real^J$
to minimize $\sigma^2_{q_j,\beta_j}$ and
define $g(\bsx) = f(\bsx)-\beta_j^\tran h(\bsx)$.
Now we apply Theorem~\ref{thm:smallregret}
to the integrand $g$ and let $\beta_*$
be the minimizing regression coefficient
from that Theorem.
Writing $g(\bsx)-\beta_*^\tran h(\bsx) =
f(\bsx) - (\beta_*-\beta_j)^\tran h(\bsx)$
we see that $\beta = \beta_*-\beta_j$ satisfies
the conclusion of this Corollary.
$\Box$

\subsection{Proof of Lemma~\ref{lem:samespace}}
\label{sec:lem:samespace}

Under the conditions of the Lemma, $\theta=\one$.
Let $\gamma_j=\beta_j-\alpha_j\beta^\tran\one$.
Now for any $\bsx$ sampled from $q_\alpha$,
we necessarily have $q_\alpha(\bsx)>0$. Then
\begin{align*}
\gamma^\tran(h(\bsx)-\one)p(\bsx)/q_\alpha(\bsx)
&=\gamma^\tran(q(\bsx)-p(\bsx))/q_\alpha(\bsx)\\
& = \sum_{j=1}^J
(\beta_j-\alpha_j\beta^\tran\one)(q_j(\bsx)-p(\bsx))/q_\alpha(\bsx)\\
& = \sum_{j=1}^J\beta_jq_j(\bsx)/q_\alpha(\bsx)
-\beta^\tran\one p(\bsx)/q_\alpha(\bsx)
-\beta^\tran\one
+\beta^\tran\one p(\bsx)/q_\alpha(\bsx)\\
& = \sum_{j=1}^J\beta_jq_j(\bsx)/q_\alpha(\bsx)
-\beta^\tran\theta,
\end{align*}
after cancelling and using $\theta=\one$.
Substituting this last expression into
each term for $\hat\mu_{\alpha,\gamma}$
for~\eqref{eq:hatmuab} yields $\tilde\mu_{\alpha,\beta}$
in~\eqref{eq:tildemuab}.
$\Box$

\subsection{Proof of Theorem~\ref{thm:convexalpha}}
\label{sec:thm:convexalpha}
First we write
\begin{align}\label{eq:varsigsq2c}
\sigma^2_\alpha + \mu^2
=
\int_D \frac{\bigl( f(\bsx)p(\bsx)\bigr)^2}{\sum_{j=1}^J\alpha_jq_j(\bsx) }\rd \bsx
\end{align}
where $D = \{\bsx\mid q_\alpha(\bsx)>0\}$ is the common domain of
$q_\alpha$ for all $\alpha\in S^0$.
For fixed $\bsx\in D$ and $r,s\in\{1,\dots,J\}$,
$$
\frac{\partial^2}{\partial \alpha_r\partial\alpha_s}
\frac{ f(\bsx)^2p(\bsx)^2}{
\sum_{j=1}^J\alpha_jq_j(\bsx) }
=\frac{2f(\bsx)^2p(\bsx)^2q_r(\bsx)q_s(\bsx)}{
\Bigl(\sum_{j=1}^J\alpha_jq_j(\bsx) \Bigr)^3}
$$
and so the integrand in~\eqref{eq:varsigsq2c} has a
positive semi-definite Hessian matrix at
every $\bsx$.  Thus the integrand is convex in $\alpha$
for each $\bsx$, and hence $\sigma^2_\alpha$ is a convex
function of $\alpha$. $\Box$

\subsection{Proof of Theorem~\ref{thm:jointconvex}}
\label{sec:thm:jointconvex}

Similar to Theorem~\ref{thm:convexalpha}, we write
\begin{align}\label{eq:varsigsq2b}
\sigma^2_{\alpha,\beta} + \mu^2
=
\int_{D(q_\alpha)} \frac{\bigl( f(\bsx)p(\bsx) - \sum_{J=1}^J\beta_j(q_j(\bsx)-p(\bsx))
\bigr)^2}{\sum_{j=1}^J\alpha_jq_j(\bsx) }\rd \bsx.
\end{align}

The function $y^2/z$ is jointly convex in $z\in(0,\infty)$ and $y\in \real$.
For fixed $\bsx\in D$ the integrand in~\eqref{eq:varsigsq2b} has
the form $y^2/z$ where $y$ is an affine function of $\beta$
and $z$ is a positive linear function of $\alpha\in S^0$.
As a result the integrand
in~\eqref{eq:varsigsq2b} is jointly convex in $(\alpha,\beta)$
and so also is $\sigma^2_{\alpha,\beta}$. $\Box$

\subsection{Proof of Theorem~\ref{thm:safetyepsilon}}
\label{sec:thm:safetyepsilon}

We use Lemma~\ref{lem:mixvarbound} in our proof.
It is of independent interest.

\begin{lemma}\label{lem:mixvarbound}
Let $\alpha,\gamma,\omega\in S$ with
$\gamma = \lambda \alpha + (1-\lambda)\omega$
and $0<\lambda\le1$.
Then
\begin{align*}
&\min_\beta\var( \tilde\mu_{\gamma,\beta} )\le \frac1{n\lambda}\sigma^2_\alpha,
\quad\text{and}\\
&\min_\beta\var( \hat\mu_{\gamma,\beta} )\le \frac1{n\lambda}\sigma^2_\alpha.
\end{align*}
\end{lemma}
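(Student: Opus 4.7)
The plan is to exhibit a single explicit choice of $\beta$ that achieves the claimed bound, from which the minimum over $\beta$ inherits the same inequality. The key observation is that since $\gamma_j=\lambda\alpha_j+(1-\lambda)\omega_j\ge\lambda\alpha_j$ for every $j$, we get the pointwise density bound $q_\gamma(\bsx)\ge\lambda q_\alpha(\bsx)$, so $1/q_\gamma\le 1/(\lambda q_\alpha)$ on $D(q_\alpha)$. This is the quantitative lever that turns a ratio-of-densities argument into the factor $1/\lambda$.

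For the $\wt\mu_{\gamma,\beta}$ bound, I would start from the mean-square formula~\eqref{eq:msqcvtilde}, which, after substituting $h_jp=q_j$ and $\theta=\one$, gives
\[
n\var(\wt\mu_{\gamma,\beta})=\int\frac{\bigl[fp-\mu q_\gamma-\sum_j\beta_j q_j+(\beta^\tran\one)q_\gamma\bigr]^2}{q_\gamma}\rd\bsx.
\]
I would try the choice $\beta_j=\mu(\alpha_j-\gamma_j)$. Since $\sum_j(\alpha_j-\gamma_j)=0$, this $\beta$ satisfies $\beta^\tran\one=0$, and a short algebraic computation collapses the numerator to $(fp-\mu q_\alpha)^2$. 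Applying $q_\gamma\ge\lambda q_\alpha$ then yields
\[
n\var(\wt\mu_{\gamma,\beta})=\int\frac{(fp-\mu q_\alpha)^2}{q_\gamma}\rd\bsx\le\frac{1}{\lambda}\int\frac{(fp-\mu q_\alpha)^2}{q_\alpha}\rd\bsx=\frac{\sigma^2_\alpha}{\lambda},
\]
and since the minimum over $\beta$ cannot exceed this value, the first inequality follows.

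For the $\hat\mu_{\gamma,\beta}$ bound, I would carry out the analogous substitution into~\eqref{eq:msqcvhat}/\eqref{eq:variscvhat}: the numerator is $fp-\mu q_\gamma-\sum_j\beta_j q_j+(\beta^\tran\one)p$. With the same choice $\beta_j=\mu(\alpha_j-\gamma_j)$, the identity $\beta^\tran\one=0$ together with $\sum_j\beta_j q_j=\mu(q_\gamma-q_\alpha)$ again reduces the numerator to $fp-\mu q_\alpha$, so the same chain of inequalities produces $\sigma^2_\alpha/\lambda$.

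The only delicate point is the domain of integration. Under the support conditions of Definition~\ref{def:supportmix} applied to $\alpha$, we have $D(q_\alpha)\supseteq\{fp\ne0\}$, so the numerator $(fp-\mu q_\alpha)^2$ vanishes on $D(q_\gamma)\setminus D(q_\alpha)$ and the pointwise bound $1/q_\gamma\le 1/(\lambda q_\alpha)$ applies wherever the integrand is nonzero. I do not expect a serious obstacle here; the main content is the algebraic verification that $\beta_j=\mu(\alpha_j-\gamma_j)$ simultaneously cancels the extra $q_\gamma$ and $p$ terms in both estimator formulas, which is a small calculation rather than a deep step.
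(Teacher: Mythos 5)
Your proof is correct, and the algebra checks out: with $\beta_j=\mu(\alpha_j-\gamma_j)$ one has $\beta^\tran\one=0$ and $\sum_j\beta_jq_j=\mu(q_\alpha-q_\gamma)$, so the numerator in both \eqref{eq:variscvtilde} and \eqref{eq:variscvhat} collapses to $fp-\mu q_\alpha$, and the pointwise bound $q_\gamma\ge\lambda q_\alpha$ finishes the job. The paper proves the lemma by a reduction rather than a direct computation: it views $q_\gamma$ as a two-component mixture $\lambda q_\alpha+(1-\lambda)q_\omega$ with weights $\wt\alpha=(\lambda,1-\lambda)$ and invokes Theorems~\ref{thm:smallregret} and~\ref{thm:smallregrethat} with $J=2$, which immediately give the bound $\sigma^2_{q_\alpha}/(n\lambda)$. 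The underlying mechanism is identical — the explicit $\beta$ used in those theorems, translated back to the $J$-dimensional parametrization, differs from yours only by a multiple of $\gamma$, to which the estimator is invariant — so this is essentially the same argument presented in inlined form. What your version buys is self-containment and explicitness: you work directly with the original $J$ control variates (so there is no implicit step of noting that the two-component control variates $q_\alpha/p$ and $q_\omega/p$ lie in the span of the $h_j$, hence the $J$-dimensional minimum is no larger), and you address the domain issue on $D(q_\gamma)\setminus D(q_\alpha)$ explicitly, which the paper leaves to the support hypotheses of the cited theorems. What the paper's route buys is brevity and reuse of already-proved results. No gap either way.
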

\begin{proof}
For $J=2$, let $\wt q_1=q_\alpha$
and $\wt q_2 = q_\omega$.
Then $q_\gamma = \wt \alpha_1 \wt q_1 +\wt\alpha_2\wt q_2$
for $\wt\alpha = (\lambda,1-\lambda)$.
The first part then follows from
Theorem~\ref{thm:smallregret} using $\wt\alpha$
and $\wt q_j$ in place of $\alpha$ and $q_j$.
The second part follows from
Theorem~\ref{thm:smallregrethat}, noting that
for a nominal $\wt q_1=q_\alpha$ the mixture
weights $\wt\alpha$  include a defensive component.
\end{proof}

Now we prove Theorem~\ref{thm:safetyepsilon} itself.
First we show that for any $\alpha\in S\setminus S^\epsilon$
we can find $\delta\in S^\epsilon$ with
$\var(\tilde\mu_{\delta,0})\le (1+\epsilon)\var(\tilde\mu_{\alpha,0})$.
Recall that $K=\#\{1\le j\le J\mid \alpha_j<\epsilon\}$.
We have $1\le K\le J-1$; the lower limit comes from
$\alpha\not\in S^\epsilon$ and the upper limit from
$\epsilon<1/d$.

Now define $\omega\in S^\epsilon$
with $\omega_j = \epsilon$ for $\alpha_j\ge\epsilon$
and $\omega_j = \Omega \equiv (1-\epsilon(J-K))/K$
for $\alpha_j<\epsilon$.
Define $\lfloor \alpha\rfloor \equiv \min_j\alpha_j$
and then set $\lambda = (\Omega-\epsilon)/(\Omega-\lfloor\alpha\rfloor)$
and $\delta_j = \lambda\alpha_j+(1-\lambda)\omega_j$.
By construction $0<\lambda< 1$. If $\alpha_j\ge\epsilon$
then $\delta_j \ge\min(\alpha_j,\omega_j)=\epsilon$.
Otherwise
\begin{align*}
\delta_j \ge
\frac{\Omega-\epsilon}{\Omega-\lfloor\alpha\rfloor}
\lfloor\alpha\rfloor +
\frac{\epsilon-\lfloor\alpha\rfloor}{\Omega-\lfloor\alpha\rfloor}\Omega=\epsilon.
\end{align*}
It follows that $\delta\in S^\epsilon$. Now from Lemma~\ref{lem:mixvarbound}
we find
\begin{align*}
\var(\tilde\mu_{\delta,0})&
\le \frac1\lambda\var(\tilde\mu_{\alpha,0})
\end{align*}
and
$$
\frac1\lambda =
\frac{\Omega-\lfloor\alpha\rfloor}{\Omega-\epsilon}
\le\frac{\Omega}{\Omega-\epsilon}
=\frac{1-\epsilon(J-K)}{1-\epsilon(J-K)-K\epsilon}
=\frac{1-\epsilon J+\epsilon K}{1-\epsilon J}.
$$

We can extend this result to the
$\var(\tilde\mu_{\alpha,\beta})$ by the same
device used in Corollary~\ref{cor:smallregretbeta}.
Then because $\alpha_1>0$ when $\alpha\in S^\epsilon$,
a defensive component is present, and the
results then apply to $\var(\hat\mu_{\alpha,\beta})$.
$\Box$

\end{document}